%
%
		\def\version{}				                                   %

\documentclass[12  pt,]{article}
\usepackage{cite}
\usepackage{subfig}
\usepackage{graphicx} 
\usepackage{url}

\usepackage{amsmath,amsthm,amssymb,a4,graphics,color}
\usepackage[mathscr]{eucal} 
\usepackage[active]{srcltx}
\usepackage[usenames,dvipsnames]{pstricks}
\usepackage{epsfig}
\usepackage{pst-grad}
\usepackage{pst-plot} 
\usepackage{textcomp}
\usepackage{cite}
\usepackage{natbib}
\usepackage{amsfonts}


\theoremstyle{definition}

\newtheorem{theorem}{Theorem}

\newtheorem{lemma}[theorem]{Lemma}



\newfam\Bbbfam 
\font\tenBbb=msbm10 
\font\sevenBbb=msbm7 
\font\fiveBbb=msbm5 
\textfont\Bbbfam=\tenBbb 
\scriptfont\Bbbfam=\sevenBbb 
\scriptscriptfont\Bbbfam=\fiveBbb

\setlength{\textheight}{8.4in} 
\setlength{\textwidth}{6.6in} 
\setlength{\topmargin}{0in} 
\setlength{\headheight}{0.12in} 
\setlength{\headsep}{.40in} 
\setlength{\parindent}{1pc} 
\setlength{\oddsidemargin}{-0.1in} 
\setlength{\evensidemargin}{-0.1in} 

\marginparwidth 40pt 
\marginparsep 0pt 
\oddsidemargin-5mm 
\topmargin -30pt 
\headheight 12pt 
\headsep 15pt 
\footskip 15pt 

\textheight 670pt 
\textwidth 170mm 
\columnsep 10pt 
\columnseprule 0pt 
\sloppy 
\parskip 0.8ex plus0.3ex minus0.2ex 
\parindent1.0em 

\DeclareMathOperator{\EX}{\mathbb{E}}
\DeclareMathOperator{\prob}{\mathbb{P}}

\begin{document}
\title{A Reduced-bias weighted least squares estimation of the Extreme Value Index }

\vspace{0.2cm}

\maketitle
\centerline
{\sc
	E. Ocran \footnote{Department of Statistics and Actuarial Science} $\footnote{eocran003@st.ug.edu.gh}$, R.  Minkah  \footnotemark[1]  $\footnote{rminkah@ug.edu.gh}$   and  K.  Doku-Amponsah \footnotemark[1]  $\footnote{kdoku@ug.edu.gh,\,\, kdoku-amponsah@ug.edu.gh}$    
}

\vspace{0.5cm}
\centerline{\sc University of Ghana}

\vspace{-1.95cm}

\renewcommand{\thefootnote}{}

\medskip

\vspace{0.2cm}

\begin{center}\small{\version}\end{center}

\bigskip

\begin{abstract}
	In this paper, we propose a reduced-bias estimator of the EVI for Pareto-type tails (heavy-tailed) distributions. This is derived using the weighted least squares method. It is shown that the estimator is asymptotically unbiased, asymptotically consistent and asymptotically normal under the second-order conditions on the underlying distribution of the data. The finite sample properties of the proposed estimator are studied through a simulation study. The results show that it is competitive to the existing estimators of the extreme value index in terms of bias and Mean Square Error. In addition, it yields estimates of $\gamma>0$ that are less sensitive to the number of top-order statistics, and hence, can be used for selecting an optimal tail fraction. The proposed estimator is further illustrated using practical datasets from pedochemical and insurance. 
	
	\vspace{0.5em}
	\textbf{Keywords and Phrases}: Extreme value theory; Extreme value index; Weighted least squares;  Large deviations; Weak  law  of  large  numbers; Limit  theorems. 
\end{abstract}

\section{\textbf{Introduction}}
Statistics of extremes deals with the estimations of the occurrences of rare events. Such events include high quantile, exceedance probability and return periods. The knowledge of the frequency and magnitude of these events help in planning to mitigate the effects of their occurrences. The primary parameter in estimating these events is the extreme value index (EVI), $\gamma$ (or also known as the tail index). The extreme value index, either implicitly or explicitly, plays a vital role in estimating other extreme events \citep{bladt2021trimmed,elvidge2018using,kithinji2021adjusted}.  Hence in statistics of extremes, the primary focus is the estimation of the EVI, $\gamma$. The EVI is crucial in extreme value analysis because all inference requires that the EVI is known.  The EVI regulates the tail heaviness or flatness of the extreme value distribution, assuming the max-domain of attraction condition holds for the underlying distribution, $F$. We may have $\gamma \in \mathbb{R}$ and the heaviness of the tail function $\bar{F}:=1-F$, increases with $\gamma$. However, in this paper, we shall consider the particular case where $\gamma>0$. That is, we shall work in the domain of attraction for maxima of an extreme value distribution function (df),
\begin{equation}\label{E1}
H_{\gamma}(x)=exp\left(-\left(1+\gamma x\right)^{-1/\gamma}\right), \hspace{6pt}1+\gamma x\ge 0. 
\end{equation}

This class of distributions has been found to be useful in diverse fields, including but not limited to insurance \citep{minkah2021robust, minkah2020tail,rohrbeck2018extreme}, finance \citep{longin2016extreme,kithinji2021adjusted,gkillas2018application}, economics \citep{embrechts2003extremes,marimoutou2009extreme,muela2017application}, telecommunication \citep{mehrnia2021wireless,finkenstadt2003extreme} and Climatology \citep{onwuegbuche2019application,yozgatligil2018extreme}.
For $\gamma >0$, the underlying distribution, $F$ belongs to the Pareto-type distributions with the upper tail function expressed as
\begin{equation}\label{E2}
\bar{F}=x^{-1/\gamma}\ell_{F}(x); \hspace{6pt} x>1,
\end{equation}
or has a tail quantile function defined by
\begin{equation}\label{E4}
U(x)=x^{\gamma}\ell_{u}(x) x>1,
\end{equation}

where $\ell_{F}$ and $\ell_{u}$ are slowly varying functions at infinity, that is,
\begin{equation}\label{E3}
\dfrac{\ell_{\mathcal{B}}(tx)}{\ell_{\mathcal{B}}(x)}\to 1, \text{as} \hspace{6pt} x\to \infty, \text{for all} \hspace{6pt} t>0, \mathcal{B}\in\{F,u\}.
\end{equation}

Using the concept of the \textit{de Bruyn conjugate}, it can be shown that $\ell_{F}$ and $\ell_{u}$ are linked \citep{Beirlant2004}.

Let $\chi=\{X_{j}\}_{j=1}^{n}$ denote a sequence of independent and identically distributed random variables with distribution function,$F$, belonging to the class of distributions defined by (\ref{E2}). Again, let $X_{j,n}$ denote the $j^{th}$ order statistic associated with the  sample $\chi$. Define the weighted log-spacings
\begin{equation}\label{E5}
Z_{j}=j\log\left(\dfrac{X_{n-j+1,n}}{X_{n-j,n}}\right), 1\le j \le k.
\end{equation}

\citet{feuerverger1999} and \citet{Beirlant1999} demonstrated that the effect of bias can be accommodated by modelling the weighted log-spacing of the order statistics and also proposed the usage of a regression model on $Z_{j}$ with $\left(\dfrac{j}{k+1}\right)^{-\rho}$ as the explanatory variable. It can be seen from \citet{Beirlant1999} and \citet{minkah2021robust} that the $Z_{j}'s$ can be approximated with a regression model with exponential responses:
\begin{equation}\label{E7}
Z_{j}=\left(\gamma+b_{n,k}\left(\dfrac{j}{k+1}\right)^{-\rho}\right)f_{j}, 1\le j \le k,
\end{equation}
where   $b_{n.k}=b((n+1)/(k+1)) \to 0$ as $k, n\to \infty$, $\rho<0$ is a second-order parameter, and  $f_{j}'s$ follow exponential distribution with a unit mean. The authors stated that this representation permits the generation of reduced-bias estimates of $\gamma$. In the case of strict Pareto distributions, (\ref{E7}) reduces to
\begin{equation}\label{E8}
Z_{j}=\gamma f_{j}, j=1,...,k,
\end{equation}
and the estimator of $\gamma$ in this case is the usual Hill estimator \citep{hill1975}.

Different estimation techniques have been proposed for estimating $\gamma$ (see for example, \citealt{ Pickands1975,hill1975,Dekkers1989moment,Beirlant1996,Beirlant2005,Caeiro2005a,Buitendag2018}). The Hill estimator \citep{hill1975} till date plays a vital role in many applications and it is the most widely used estimator in practice. Using the Renyi's representation, the \citet{hill1975} estimator can be expressed as the mean of the weighted log-spacings, 
\begin{equation}\label{E9}
\gamma_{k}^{H}=\dfrac{1}{k}\sum_{j=1}^{k}Z_{j},
\end{equation}

A desirable property of the Hill estimator is that it has a minimal variance and is asymptotically normal \citep{de1998asymptotic,mason1982laws}. A drawback which makes the usage of the Hill estimator in application difficult is its sensitivity to the number of top order statistics, $k$. This drawback is common to most semi-parametric estimators currently in existence. Any estimator which is sensitive to the value of $k$ may be challenging to use in application since a little change in the value of $k$ may result in large change in the value of the extreme value index. In addition it makes it challenging to select the optimal $k.$

Since the Hill estimator is obtained with $\ell_{F}(x)=1$ in (\ref{E2}), in order to obtain a reduced-bias estimator, a functional form of $\ell_{F}$ is needed. Therefore \citet{Beirlant1999} assume that for the Hall class of distributions \citep{hall1982} the slowly varying part $\ell_{F}$ satisfies a second-order condition :\\
\textbf{\textit{Assumption ($R_{\ell}$)}:} \textit{There exists some auxiliary function $b>0$ that is regularly varying with index $\rho<0$ satisfying $b(x) \to 0$ as $x\to \infty$, such that for all $\lambda\ge 1,$ as $x \to \infty$ }
\begin{equation}\label{E10}
\log \dfrac{\ell_{F}(\lambda x)}{\ell_{F}(x)}\approx b(x)k_{\rho}(\lambda)
\end{equation}
\textit{with $k_{p}(\lambda)=\dfrac{\lambda^{\rho}-1}{\rho}$}.
That is, under assumption $R_{\ell}$, we assume the slowly varying part has a structure so we do not ignore it and hence reduces the bias accompanying the approximation of $\ell_{F}$ as a constant. 

\citet{Beirlant2002} further approximated (\ref{E7}) under assumption $R_{\ell}$ as
\begin{equation}\label{E11}
Z_{j}=Z_{j}(\rho,\epsilon_j)=\gamma+b_{n,k}C_{j}(\rho)+\epsilon_{j},
\end{equation}
where $b_{n,k}=b\left(n/k\right)\to 0$  ($k, n\to \infty$) is the slope,  $C_j=C_{j}(\rho)=\left(j/(k+1)\right)^{-\rho}$ is the covariate, $\gamma$ is the intercept of the regression line,and the error term, $\epsilon_{j}$ is asymptotically normal with variance, $\gamma^2$, and mean, $0$. The regression model in (\ref{E11}) has independent exponential responses \citep{Beirlant1999}. Therefore, for fixed $j$, the $Z_j$'s are approximately independent exponentially distributed with mean 

\begin{equation}\label{ME}
\mu_{j}=\EX(Z_j)=\gamma+\EX\big[\hat b_{n,k}(\hat\rho)C_j(\hat\rho)\big],
\end{equation}
where   $\hat \rho$  is  estimator  of  $\rho$   and  $\hat b_{n,k}(\hat\rho)$  is  estimator  of  $ b_{n,k}.$
Note that $1/(k+1)^{-\hat\rho}\le C_j (\hat \rho)\le 1$ for every $j$, and hence, it follows that
\begin{equation}\label{Bound}
	\gamma+\EX\Big[\dfrac{\hat b_{n,k}(\hat \rho)}{(k+1)^{-\hat\rho}}\Big]\le \mu_{j} \le \gamma+\EX\Big[\hat b_{n,k}(\hat\rho)\Big] \hspace{6pt} \text{for every} \hspace{6pt} j\in\{1, 2, ..., k\}.
\end{equation}

This paper seeks to propose a reduced-bias  estimator that yields stable estimates over the number of top order statistics, $k.$ It is worth mentioning that \citet{Buitendag2018} has proposed the ridge regression as another reduced-bias estimator, which yields much more stable estimates compared to other existing estimators. However, our proposal is an alternative estimator that uses weighted least 
squares on the $Z_{j}'s$ in (\ref{E11}) to estimate $\gamma$ and $b_{n,k}$ jointly, and $\rho<0$ externally using the minimum variance method introduced by \cite{Buitendag2018}.

The concept of estimating the extreme value index using a weighted least squares estimator is not new in the field of extreme value theory.
Researchers  like \citet{Beirlant1996}, \citet{viharos1999weighted}, \citet{Huisman2001} and \citet{husler2006weighted} have used the weighted least squares method to estimate the extreme value index under different conditions. First,  \citet{Beirlant1996} fitted a weighted least squares estimator to the tail of the Pareto quantile plot and estimated $\gamma$ as the slope of the linear model. Second,  \citet{viharos1999weighted} linearised (3) and then fitted weighted least squares to their error sum of squares. Third,  \citet{Huisman2001} also employed the weighted least squares to estimate the extreme value index for heavy-tailed distributions whose slowly varying part is given as $\ell(x)=a\left(1+bx^{-\beta}\right)$, where $\beta>0$ and $a,b \in \mathbb{R}$. Lastly, \citet{husler2006weighted} estimated $\gamma$ by applying the weighted least squares method to 
$$\log\left(\dfrac{X_{n-i,n}}{X_{n.k,n}}\right)=\gamma\log\dfrac{k+1}{i+1},
$$
where $X_{n-i,n}=F_{n}^{\leftarrow}\left(1-\dfrac{i}{n}\right),i=0,1,...,k$.

Our proposed methodology differs from the existing studies in terms of; 
\begin{itemize}
	\item [i.] the definition of the weight function, and
	\item [ii.] the form of the regression model.
\end{itemize}
Hence, to the best of our knowledge, our proposed methodology has not been considered in literature.

Additionally, some authors have proposed other reduced-bias kernel estimators which utilises the weighted log-spacing for the estimation of $\gamma$ (see for example \cite{Beirlant2005},\cite{Caeiro2019a} and \cite{Caeiro2019b} )

The rest of the paper is organised as follows. The reduced-bias weighted least square estimator and its asymptotic properties with their proofs are presented in section~\ref{S2}. In section~\ref{S3}, the proposed estimator's performance is compared to other existing estimators via a simulation study and a practical illustration. Lastly, in Section 4, we provide concluding statements of the  study.\\

\section{\textbf{Materials and Methods}}\label{S2}

\subsection{The proposed estimator}
To jointly  estimate  $\gamma$ and $b_{n,k}$ in (\ref{E11}) via our proposed weighted least squares method, we define the loss function for the weighted least squares estimator as
\begin{equation}\label{E12}
L_{k}\left(\gamma,b_{n,k};W\right)=\sum_{j=1}^{k}W_{j}\left(Z_{j}-\gamma-b_{n,k}C_{j}\right)^2,
\end{equation}
for $j\in \{1,2,...,k\}$. The weight function is defined as 
\begin{equation}\label{E13}
W_{j}=1-\dfrac{j}{k+1}, 1\le j \le k,
\end{equation}
where $W_{j}\in (0,1)$ and decreases linearly with respect to $j$. We minimise (\ref{E12}) with respect to $\gamma$ and $b_{n,k}$ to obtain the following Weighted Least Squares (WLS) estimators:
\begin{equation}\label{BWLS}
\hat{b}_{n,k}=\dfrac{\sum_{j=1}^{k}\tilde{W}_{j}\left(C_{j}-\sum_{j=1}^{k}\tilde{W}_{j}C_{j}\right)Z_{j}}{\sum_{j=1}^{k}\tilde{W}_{j}C_{j}^{2}-\left(\sum_{j=1}^{k}\tilde{W}_{j}C_{j}\right)^2} 
\end{equation}
\begin{center}
	and
\end{center}
\begin{equation}\label{RWLSP}
\hat{\gamma}_{wls}^{+}=\sum_{j=1}^{k}\tilde{W}_{j}Z_{j}-\hat{b}_{n,k}\sum_{j=1}^{k}\tilde{W}_{j}C_{j},
\end{equation}
where
\begin{equation}\label{WT}
\tilde{W}_{j}=\dfrac{W_{j}}{\sum_{j=1}^{k}W_{j}}, 1\le j\le k 
\end{equation}
and  $C_{j}=(j/(k+1))^{-\hat{\rho}}$. Here,
$\tilde{W}_{j}$ is normalised and sum up to 1, (i.e., $0\le \tilde{W}_{j} \le 1$ and $\sum_{j=1}^{k}\tilde{W}_{j}=1$) and $\rho<0$ is a second-order parameter which is estimated externally using methods proposed \cite{Buitendag2018} and \cite{alves2003estimation}.

The parameter $\rho$ plays an essential role when dealing with optimisation in extreme value analysis. The speed of convergence of the limiting distribution of the extremes and also the asymptotic normality of estimators of $\gamma$ is regulated by (\ref{E10}). It can be observed that the rate of convergence of (\ref{E10}) is determined by $\rho$. The estimation of $\rho$ is also vital in controlling the bias component of most EVI estimators (see \citealp{Beirlant1999, feuerverger1999, Gomes2002, Gomes2008}, among others). Therefore, it is important to have a good estimator for $\rho$ for an efficient adaptive choice of the optimal $k$ for EVI estimators. 
 In the simulation study and the practical illustration, we considered the minimum variance approach introduced by \cite{Buitendag2018} and the \citet{alves2003estimation} estimators. However, we present the results for the minimum variance method since that yields much more stable EVI estimates for the practical datasets.

\subsubsection{Properties of the Proposed estimator}
In this section we demonstrate that, the proposed weighted least squares estimator,$\hat{\gamma}_{wls}^{+}$ is asymptotically unbiased, consistent and normal. 

First, we compute the asymptotic mean square error (AMSE) of the estimator, $\hat{\gamma}_{wls}^{+}$. It is well known that the mean square error (MSE) can be decomposed into a variance term and a bias term. The bias is defined as $\EX(\hat{\gamma}_{wls}^{+})-\gamma$, the distance between the estimator's expected value and the parameter $\gamma$. An estimator is said to be unbiased if the bias is $0$, in which case the MSE is just the variance of the estimator. The expected value of the  estimator $\hat{\gamma}_{wls}^{+}$ is given by
\begin{align*}
\EX\left(\hat{\gamma}_{wls}^{+}\right)
&=\sum_{j=1}^{k}\tilde{W}_{j}\EX(Z_{j})-\hat{b}_{n,k}\sum_{j=1}^{k}\tilde{W}_{j}C_{j}\\
&=\sum_{j=1}^{k}\tilde{W}_{j}\left(\gamma+\hat{b}_{n,k}C_{j}\right)-\hat{b}_{n,k}\sum_{j=1}^{k}\tilde{W}_{j}C_{j}\\
&=\gamma.
\end{align*}
Therefore, the asymptotic bias of $\hat{\gamma}^{+}_{wls}$, $Abias\left(\hat{\gamma}^{+}_{wls}\right)$ is
\begin{align*}
Abias(\hat{\gamma}_{wls}^{+})
&=\EX(\hat{\gamma}_{wls}^{+})-\gamma\\
&=0.
\end{align*}

Since $\hat{\gamma}_{wls}^{+}$ is asymptotically unbiased, the AMSE is the same as the asymptotic variance of the estimator, which we proceed to find.  Observe  from (\ref{BWLS}) and (\ref{RWLSP}) that  we can also rewrite $\hat{\gamma}_{wls}^+$ as
\begin{equation*}
\hat{\gamma}_{wls}^{+}=\sum_{j=1}^{k}\tilde{W}_{j}\left(1+\dfrac{S_{1}^2-S_{1}C_{j}}{S_{2}}\right)Z_{j},
\end{equation*} 
where
$S_{1}=\sum_{j=1}^{k}\tilde{W}_{j}C_{j}$  and 
$S_{2}=\sum_{j=1}^{k}\tilde{W}_{j}C_{j}^2-\left(\sum_{j=1}^{k}\tilde{W}_{j}C_{j}\right)^2$. Now, let  $\dot{S}:=\sum_{j=1}^{k}\tilde{W}_{j}^2\left(S_{1}-C_{j}\right)$ and
$\ddot{S}:=\sum_{j=1}^{k}\tilde{W}_{j}^2\left(S_{1}-C_{j}\right)^2$  and  observe  that  we  have

\begin{align*}
Var(\hat{\gamma}_{wls}^{+})
&=\sum_{j=1}^{k}\tilde{W}_{j}^2\left(1+\dfrac{S_{1}^2-S_{1}C_{j}}{S_{2}}\right)^2Var(Z_{j})\\
&=\gamma^2\left(\sum_{j=1}^{k}\tilde{W}_{j}^2+\dfrac{2S_{1}}{S_{2}}\sum_{j=1}^{k}\tilde{W}_{j}^2(S_{1}-C_{j})+\dfrac{S_{1}^2}{S_{2}^2}\sum_{j=1}^{k}\tilde{W}_{j}^2(S_{1}-C_{j})^2\right)\\
&=\gamma^2\left(\dfrac{4}{3k}+\dfrac{2S_{1}\dot{S}}{S_{2}}+\dfrac{S_{1}^2\ddot{S}}{S_{2}^2}\right)+O({1}/{k}),
\end{align*}
where  we  have  used $\displaystyle \lim_{k\to\infty} \Big[k\sum_{j=1}^{k}\tilde{W}_{j}^2\Big] =\lim_{k\to\infty} \Big[\frac{4}{k}\sum_{j=1}^{k} W_{j}^2\Big]=4\int_{0}^{1}(1-u)^2 du= 4/3 $  in  the  last  step.
Hence, the AMSE of the weighted least squares estimator is given by
\begin{equation}
0\le AMSE(\hat{\gamma}_{wls}^{+})=\gamma^2\left(\dfrac{4}{3k}+\dfrac{4S_{1}\dot{S}}{S_{2}}+\dfrac{S_{1}^2\ddot{S}}{S_{2}^{2}}\right)+O(1/k).
\end{equation}

Note  that, the following basic properties are required to study the asymptotic behaviour of the reduced-bias weighted least squares estimator.  
\begin{lemma}\label{L1}
	Assume that $\rho$ is estimated by a consistent estimator $\hat{\rho}$.  Then, as $k\rightarrow \infty$;
	\begin{itemize}
		\item[i.]
		$S_{1}=\sum_{j=1}^{k}\tilde{W}_{j}C_{j} \rightarrow 2/(1-\rho)(2-\rho)$.
		\item [ii.]
		$S_{2}=\sum_{j=1}^{k}\tilde{W}_{j}C_{j}^2-\left(\sum_{j=1}^{k}\tilde{W}_{j}C_{j}\right)^2 \rightarrow \rho^2(5-\rho)/(1-2\rho)(1-\rho)^2(2-\rho)^2$.
		\item[iii.]
		$\dot{S}=\sum_{j=1}^{k}\tilde{W}_{j}^2\left(S_{1}-C_{j}\right) \rightarrow 0$.
		\item[iv.]
		$\ddot{S}=\sum_{j=1}^{k}\tilde{W}_{j}^2\left(S_{1}-C_{j}\right)^2 \rightarrow 0$.
	\end{itemize}
\end{lemma}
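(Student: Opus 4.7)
My plan is to reduce all four limits to elementary Riemann-sum evaluations once the normalization of $\tilde W_j$ is made explicit. First I compute
\[
\sum_{j=1}^{k} W_j = k - (k+1)^{-1}\sum_{j=1}^{k} j = k/2,
\]
so that $\tilde W_j = (2/k)\bigl(1 - j/(k+1)\bigr)$. Since $\hat\rho$ is consistent and the map $(u,r)\mapsto u^{-r}$ is (jointly) uniformly continuous on $[0,1]\times K$ for any compact neighbourhood $K$ of the true value $\rho<0$, I may replace $\hat\rho$ by $\rho$ throughout at the cost of an $o(1)$ (in probability) error and then argue deterministically.

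For (i), I recognize
\[
\sum_{j=1}^{k} \tilde W_j C_j = \frac{2}{k}\sum_{j=1}^{k}\Bigl(1-\tfrac{j}{k+1}\Bigr)\Bigl(\tfrac{j}{k+1}\Bigr)^{-\rho}
\]
as a Riemann sum for the continuous function $u\mapsto 2(1-u)u^{-\rho}$ on $[0,1]$; direct integration yields $2/[(1-\rho)(2-\rho)]$. For (ii), the same argument delivers $\sum_{j=1}^{k}\tilde W_j C_j^{2}\to\int_0^1 2(1-u)u^{-2\rho}\,du = 1/[(1-2\rho)(1-\rho)]$, and the bulk of the work is then the algebraic simplification of
\[
S_2 \;\longrightarrow\; \frac{1}{(1-2\rho)(1-\rho)} \;-\; \frac{4}{(1-\rho)^2(2-\rho)^2}.
\]
Placed over the common denominator $(1-2\rho)(1-\rho)^2(2-\rho)^2$, the numerator expands as $(1-\rho)(2-\rho)^2 - 4(1-2\rho) = (4 - 8\rho + 5\rho^2 - \rho^3) - (4 - 8\rho) = \rho^2(5-\rho)$, producing the stated limit. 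This bookkeeping is the one step that requires any real care.

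For (iii) and (iv), a single power of $\tilde W_j$ already enforces $\mathcal O(1/k)$ smallness. Indeed $\tilde W_j \le 2/k$ uniformly, and since $\rho<0$ we have $C_j = (j/(k+1))^{-\rho}\in(0,1]$, so $|S_1-C_j|\le 2$ and $(S_1-C_j)^2 \le 4$ uniformly in $j$. Hence
\[
|\dot S| + |\ddot S| \;\le\; 6\sum_{j=1}^{k}\tilde W_j^{2} \;\le\; 6\cdot\frac{2}{k}\sum_{j=1}^{k}\tilde W_j \;=\; \frac{12}{k} \;\longrightarrow\; 0,
\]
consistent with the fact (already invoked in the variance calculation) that $k\sum_{j=1}^k \tilde W_j^2 \to 4/3$. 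The consistency of $\hat\rho$ enters only through the uniform-continuity replacement at the outset; everything else is deterministic Riemann-sum analysis together with the algebraic reduction highlighted above.
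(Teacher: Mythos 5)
Your proof is correct and follows essentially the same route as the paper's: the normalization $\sum_{j=1}^{k}W_j=k/2$, Riemann-sum convergence of $\tfrac{1}{k}\sum_{j}W_jC_j$ and $\tfrac{1}{k}\sum_{j}W_jC_j^2$ to $\int_0^1(1-u)u^{-\rho}\,du$ and $\int_0^1(1-u)u^{-2\rho}\,du$ together with the identity $(1-\rho)(2-\rho)^2-4(1-2\rho)=\rho^2(5-\rho)$ for parts (i)--(ii), and uniform $O(1/k)$ bounds from $\tilde W_j\le 2/k$ and $C_j\in(0,1]$ for parts (iii)--(iv). Your single combined bound $|\dot S|+|\ddot S|\le 12/k$ is a tidier version of the paper's sandwich estimates, and your uniform-continuity argument for replacing $\hat\rho$ by $\rho$ makes explicit a step the paper leaves implicit.
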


\begin{proof}[Proof of Lemma 1]
	The following approximations are useful in the proof of Lemma~\ref{L1}:
	We  recall  the weight function  and  the  normalised  weight  function  as follows:
	\begin{equation*}
	W_{j}=1-\dfrac{j}{k+1}, 1\le j \le k\\
	\end{equation*}
	and
	$$\tilde{W}_{j}=\dfrac{W_{j}}{\sum_{j=1}^{k}W_{j}}$$
	
	Now the proof of Lemma~\ref{L1} is as follows:
	
	\begin{itemize}
		\item [i.]
		$S_{1}=\sum_{j=1}^{k}\tilde{W}_{j}C_{j}$ can be rewritten as $$S_{1}=\dfrac{2}{k}\sum_{j=1}^{k}W_{j}C_{j}=\dfrac{2}{k}\sum_{j=1}^{k}\left(1-\dfrac{j}{k+1}\right)\left(\dfrac{j}{k+1}\right)^{-\rho}.$$\\
		
		As \hspace{6pt} $k \rightarrow \infty$,  we  have 	
		\begin{align*}
		\dfrac{1}{k}\sum_{j=1}^{k}W_{j}C_{j}
		=\dfrac{1}{k}\sum_{j=1}^{k}\left(1-\dfrac{j}{k+1}\right)\left(\dfrac{j}{k+1}\right)^{-\rho}&= \int_{\dfrac{1}{k}}^{1}\left(1-u\right)u^{-\rho}\,d{u}+o(1)\\
		&=\dfrac{1}{(1-\rho)(2-\rho)}+O(1).
		\end{align*}
		Hence $S_{1} \rightarrow 2/(1-\rho)(2-\rho)$ as $k \rightarrow \infty$.
		\item [ii.]
		$S_{2}=\sum_{j=1}^{k}\tilde{W}_{j}C_{j}^2-\left(\sum_{j=1}^{k}\tilde{W}_{j}C_{j}\right)^2$ can also be expressed as $$S_{2}=\dfrac{2}{k}\sum_{j=1}^{k}W_{j}C_{j}^2-S_{1}^2=\dfrac{2}{k}\sum_{j=1}^{k}\left(1-\dfrac{j}{k+1}\right)\left(\dfrac{j}{k+1}\right)^{-2\rho}-\left(\dfrac{2}{(1-\rho)(2-\rho)}\right)^2.$$\\
		
		Also as  $k \rightarrow \infty$,
		\begin{align*}
		\dfrac{1}{k}\sum_{j=1}^{k}W_{j}C_{j}^2
		&=\dfrac{1}{k}\sum_{j=1}^{k}\left(1-\dfrac{j}{k+1}\right)\left(\dfrac{j}{k}\right)^{-\rho}
	=\dfrac{1}{2(1-\rho)(1-2\rho)}+o(1).
		\end{align*}

		This implies that, as $k \rightarrow \infty$;
		\begin{align*}
		S_{2}
		&=\dfrac{2}{2(1-\rho)(1-2\rho)}-\dfrac{4}{(1-\rho)^2(2-\rho)^2}+ o(1)=\dfrac{\rho^2(5-\rho)}{(1-2\rho)(1-\rho)^2(2-\rho)^2}+o(1).
		\end{align*}
		That is, as $k \rightarrow \infty$, $S_{2} \rightarrow \rho^2(5-\rho)/(1-2\rho)(1-\rho)^2(2-\rho)^2$.
		
		\item[iii.] The expression $\dot{S}=\sum_{j=1}^{k}\tilde{W}_{j}^2\left(S_{1}-C_{j}\right)$ can be rewritten as $$\dot{S}=\dfrac{4}{k^2}\sum_{j=1}^{k}W_{j}^2\left(S_{1}-C_{j}\right).$$ 
		Note  from  $\displaystyle 	0\le W_{j}\le 1 \text{and} \hspace{6pt} 0\le C_{j}\le 1$ that we have
		
		$$-\sum_{j=1}^{k}W_{j}^2C_{j}\le \sum_{j=1}^{k}W_{j}^2(S_{1}-C_{j})\le 2\sum_{j=1}^{k}W_{j}^2$$
		$$ -\dfrac{4}{k^2}\sum_{j=1}^{k}W_{j}^2C_{j}\le \dfrac{4}{k^2}\sum_{j=1}^{k}W_{j}^2(S_{1}-C_{j})\le \dfrac{8}{3k}.$$
		
		Therefore,  we  have,  
		$\displaystyle  -\dfrac{4}{k}\le \dfrac{4}{k^2}\sum_{j=1}^{k}W_{j}^2(S_{1}-C_{j})\le \dfrac{8}{3k},$
	
		which  gives	$0\le\lim\limits_{k\rightarrow \infty}\dfrac{4}{k^2}\sum_{j=1}^{k}W_{j}^2(S_{1}-C_{j})\le 0.$
		Hence $\dot{S} \rightarrow 0$ as $k\rightarrow \infty$.

		\item[iv.] We can also express $\ddot{S}=\sum_{j=1}^{k}\tilde{W}_{j}^2\left(S_{1}-C_{j}\right)^2$ as $$\ddot{S}=\dfrac{4}{k^2}\sum_{j=1}^{k}W_{j}^2\left(S_{1}-C_{j}\right)^2.$$
		
		We  observe  that  
		$\displaystyle  C_{j}^2\le (S_{1}-C_{j})^2\le 4$
		and  hence, we  have   $$\dfrac{4}{k^2}\sum_{j=1}^{k}W_{j}^2C_{j}^2\le\dfrac{4}{k^2}\sum_{j=1}^{k}W_{j}^2(S_{1}-C_{j})^2\le \dfrac{16}{k^2}\sum_{j=1}^{k}W_{j}^2\le\dfrac{16}{3k} .$$
		
		Now,  using 
		\begin{equation}\label{KDA1}
		0\le \dfrac{4}{k^2}\sum_{j=1}^{k}W_{j}^2C_{j}^2\le\dfrac{4}{k}
		\end{equation}
		
		in  \eqref{KDA1},  we  obtain 
		$0\le \dfrac{4}{k^2}\sum_{j=1}^{k}W_{j}^2(S_{1}-C_{j})^2\le \dfrac{16}{3k}$    and 
		$ 0\le \lim\limits_{k\rightarrow \infty}\dfrac{4}{k^2}\sum_{j=1}^{k}W_{j}^2(S_{1}-C_{j})^2\le 0 .$ 
		Thus  we  have , $\ddot{S} \rightarrow 0$ as $k\rightarrow \infty$ which  completes the proof of Lemma~\ref{L1}.
	\end{itemize}
\end{proof}

A desirable property of an estimator is consistency. If more observations are included, we hope to obtain a lot of information about the unknown parameter. From Lemma~\ref{L1}, the AMSE of $\hat{\gamma}_{wls}^{+}$ approaches 0, as $k \to \infty$. This implies that the estimator $\hat{\gamma}_{wls}^{+}$ is ``MSE-Consistent".\\

We conclude this section  by  stating and proving  the sampling  distribution  of the weighted least squares estimator, $\hat{\gamma}_{wls}^{+}$.   We write    $$S(\rho):=(1-2\rho)(1-\rho)^2(2-\rho)^2/\rho^2(5-\rho).$$

\begin{theorem}\label{TH1}
	Suppose (\ref{E4}) and (\ref{E10}) holds. Assume also that $\rho$ is estimated by a consistent estimator $\hat \rho$ with $\EX[S(\hat\rho)]<\infty.$  Let  $k \to \infty, k/n\to 0,$   as $n\to \infty$  and $\sqrt{k}b_{n,k}\to_{p} 0.$ Then  
	\begin{equation}\label{E18}
	\sqrt{3k}(\hat{\gamma}_{wls}^{+}-\gamma)/2\gamma \to_{d}N(0,1).
	\end{equation}
\end{theorem}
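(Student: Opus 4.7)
The plan is to reduce the asymptotic normality claim to a triangular-array CLT for a weighted sum of (asymptotically) independent centered exponentials, and then combine Lemma~\ref{L1} with Slutsky-type arguments to absorb the dependence on $\hat\rho$ and on the second-order remainder.

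First, I would rewrite the estimator in closed form as a linear functional of the $Z_j$'s. Starting from the identity
$$\hat{\gamma}_{wls}^{+}=\sum_{j=1}^{k} a_{n,j}\, Z_{j},\qquad a_{n,j}:=\tilde{W}_{j}\Bigl(1+\frac{S_{1}^{2}-S_{1}C_{j}}{S_{2}}\Bigr),$$
I would verify the two key identities $\sum_{j} a_{n,j}=1$ and $\sum_{j} a_{n,j}C_{j}=0$ (using $\sum\tilde{W}_j=1$, $\sum\tilde{W}_jC_j=S_1$, and $\sum\tilde{W}_jC_j^{2}=S_{2}+S_{1}^{2}$). Substituting the representation (\ref{E11}) for $Z_{j}$ then gives the clean decomposition
$$\hat{\gamma}_{wls}^{+}-\gamma=\sum_{j=1}^{k} a_{n,j}\,\epsilon_{j}+r_{n,k},$$
where $r_{n,k}$ collects the higher-order remainder from the second-order approximation (under $(R_\ell)$) that is not linear in $C_j$.

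Second, I would apply the Lyapunov CLT to the main term $T_{n,k}:=\sum_{j} a_{n,j}\epsilon_{j}$. By the Renyi representation for exponential spacings, the $\epsilon_j/\gamma$ are, to leading order, iid mean-zero, variance-one exponentials (with all moments finite). The variance of $T_{n,k}$ is exactly the expression computed in the paper,
$$\mathrm{Var}(T_{n,k})=\gamma^{2}\Bigl(\sum_{j=1}^{k}\tilde{W}_{j}^{2}+\frac{2S_{1}\dot S}{S_{2}}+\frac{S_{1}^{2}\ddot S}{S_{2}^{2}}\Bigr),$$
and Lemma~\ref{L1}(iii)-(iv) together with $k\sum_{j}\tilde{W}_{j}^{2}\to 4/3$ force $\mathrm{Var}(T_{n,k})=4\gamma^{2}/(3k)\,(1+o(1))$. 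For the Lyapunov ratio, $|a_{n,j}|=O(1/k)$ uniformly in $j$, so $\sum_{j}|a_{n,j}|^{3}\EX|\epsilon_{j}|^{3}=O(k^{-2})$ while $\mathrm{Var}(T_{n,k})^{3/2}=O(k^{-3/2})$, giving a Lyapunov ratio of order $k^{-1/2}\to 0$. This yields $\sqrt{3k}\,T_{n,k}/(2\gamma)\to_{d} N(0,1)$.

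Third, I would dispose of the remainder $r_{n,k}$ and of the dependence on $\hat\rho$. Under $(R_\ell)$ with auxiliary function of regular variation index $\rho<0$, the remainder is of order $b_{n,k}^{2}$ plus $b_{n,k}$ times a second-order correction, so the hypothesis $\sqrt{k}\,b_{n,k}\to_{p}0$ gives $\sqrt{k}\,r_{n,k}\to_{p} 0$. The coefficients $a_{n,j}$ depend on $\hat\rho$ through $C_j(\hat\rho)$, $S_1(\hat\rho)$ and $S_2(\hat\rho)$; using the consistency of $\hat\rho$, continuous mapping in Lemma~\ref{L1}, and the integrability condition $\EX[S(\hat\rho)]<\infty$ to control $1/S_{2}(\hat\rho)$, a Slutsky argument lets me replace $\hat\rho$ by $\rho$ in the limiting variance without affecting the distributional limit. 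Combining these three pieces yields (\ref{E18}).

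The main obstacle will be step two: the $\epsilon_{j}$'s are not exactly iid exponentials but only asymptotically so, and the standard Renyi exponential representation for $Z_j$ must be combined with the second-order condition carefully to show that the higher moments of $\epsilon_j$ remain uniformly bounded as $j,k\to\infty$, which is what the Lyapunov condition actually needs; a clean way to do this is to couple the $Z_j$'s to iid $\mathrm{Exp}(\gamma)$ variables via the quantile transform and show that the coupling error is negligible in $L^{3}$, so that the CLT for the idealized exponential array transfers to our array.
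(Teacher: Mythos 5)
Your algebraic setup is correct, and it is in fact cleaner than the paper's own argument: the identities $\sum_{j}a_{n,j}=1$ and $\sum_{j}a_{n,j}C_{j}=0$ do hold, so under model (\ref{E11}) (with $\hat\rho=\rho$) the slope term is annihilated exactly and no separate control of $\hat b_{n,k}$ is needed; your Lyapunov verification with $|a_{n,j}|=O(1/k)$ is also sound. The genuine gap is the variance-identification step. You infer $\mathrm{Var}(T_{n,k})=\frac{4\gamma^{2}}{3k}(1+o(1))$ from Lemma~\ref{L1}(iii)--(iv), but those statements only say $\dot S\to 0$ and $\ddot S\to 0$; what your step needs is $\dot S=o(1/k)$ and $\ddot S=o(1/k)$, and that is false. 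Since $\tilde W_{j}=2W_{j}/k$, one has $k\dot S\to 4\int_{0}^{1}(1-u)^{2}\bigl(S_{1}-u^{-\rho}\bigr)\,du=\frac{-8\rho}{3(1-\rho)(2-\rho)(3-\rho)}>0$ and $k\ddot S\to 4\int_{0}^{1}(1-u)^{2}\bigl(S_{1}-u^{-\rho}\bigr)^{2}\,du>0$ for every $\rho<0$: both are exactly of order $1/k$, the same order as $\sum_{j}\tilde W_{j}^{2}$. Concretely, at $\rho=-1$ one gets $S_{1}=1/3$, $S_{2}=1/18$, $k\dot S\to 1/9$, so the cross term $2S_{1}\dot S/S_{2}$ contributes $\tfrac{4}{3}k^{-1}$ — as large as the main term $\tfrac{4}{3k}$ — and altogether $k\,\mathrm{Var}(T_{n,k})\to\gamma^{2}\bigl(\tfrac43+\tfrac43+\tfrac{32}{15}\bigr)=\tfrac{24}{5}\gamma^{2}$, not $\tfrac43\gamma^{2}$. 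Your CLT machinery would therefore deliver, after the normalization in (\ref{E18}), a limit $N\bigl(0,\sigma^{2}(\rho)\bigr)$ with a $\rho$-dependent $\sigma^{2}(\rho)\neq 1$ (e.g.\ $\sigma^{2}(-1)=18/5$), so the proof cannot be completed to the stated conclusion along this route.

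It is worth seeing why this clashes with the paper, because your route is genuinely different from the paper's and the difference is exactly where the tension lies. The paper never forms your exact linear representation; instead it asserts (Lemma~\ref{L4}(ii), via a G\"artner--Ellis large-deviations argument) that $\sqrt{k}\,\hat b_{n,k}(\hat\rho)\to_{p}0$, after which $\hat\gamma_{wls}^{+}$ is treated as $\sum_{j}\tilde W_{j}Z_{j}$ and the variance is computed from $k\sum_{j}\tilde W_{j}^{2}\to 4/3$ alone. But your decomposition gives $\hat b_{n,k}=b_{n,k}+\sum_{j}\mathscr{M}_{k}(C_{j})\epsilon_{j}$ with $\mathrm{Var}(\hat b_{n,k})=\gamma^{2}\ddot S/S_{2}^{2}=\Theta(1/k)$, i.e.\ $\sqrt{k}(\hat b_{n,k}-b_{n,k})$ has a nondegenerate Gaussian limit, so the slope-estimation fluctuation enters at the $\sqrt{k}$ scale and cannot be discarded (the paper's LDP step applies a fixed-threshold large-deviation bound to the shrinking threshold $\epsilon/\sqrt{k}$, which is where its argument breaks). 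In short: your approach is the more honest one, but to finish it you must either prove the cross terms vanish (they do not) or restate the limiting variance as $\sigma^{2}(\rho)=\tfrac34\bigl(\tfrac43+2S_{1}d_{1}/S_{2}+S_{1}^{2}d_{2}/S_{2}^{2}\bigr)$ with $d_{1}=\lim k\dot S$, $d_{2}=\lim k\ddot S$. The remaining sketched items (the $L^{3}$ coupling of $\epsilon_{j}$ to exact exponentials and the bound $\sqrt{k}\,r_{n,k}\to_{p}0$ under $(R_{\ell})$) are plausible and no less rigorous than the paper's treatment, but they are secondary to the variance issue.
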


The following five  Lemmas will be required to prove Theorem (\ref{TH1}).  We  write  \begin{equation}\label{E19}
\mathscr{M}_{k}(C_{j}):=\dfrac{\tilde{W}_{j}(C_{j}-\sum_{j=1}^{k}\tilde{W}_{j}C_{j})}{\sum_{j=1}^{k}\tilde{W}_{j}C_{j}^2-\left(\sum_{j=1}^{k}\tilde{W}_{j}C_{j}\right)^2}.
\end{equation}

\begin{lemma}\label{L2}
	Let  $C_{j}=\left(\dfrac{j}{k+1}\right)^{-\rho}, \tilde{W}_{j}=\dfrac{W_{j}}{\sum_{j=1}^{k}W_{j}},j\in\{1,2,...,k\}$ and $\rho<0.$   Then   as $k\to \infty$,

	$$\mathscr{M}(C_{j})\to O(1/k^{1+\omega}),$$
	
	where   $0<\omega< 0.1 $.
\end{lemma}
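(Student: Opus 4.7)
The plan is to express $\mathscr{M}_k(C_j)=\tilde W_j(C_j-S_1)/S_2$ and bound numerator and denominator separately, using Lemma~\ref{L1} to turn the denominator into a constant in the limit. The first concrete step is the arithmetic identity $\sum_{j=1}^{k}W_j = k - (k+1)^{-1}\sum_{j=1}^{k}j = k/2$, which gives the closed form $\tilde W_j = 2W_j/k$ and in particular the uniform bound $0\le \tilde W_j \le 2/k$ together with the finer structural fact that $\tilde W_j$ behaves like $2(1-j/(k+1))/k$, decaying faster near $j=k$.

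For the denominator I would invoke Lemma~\ref{L1}(ii): since $\rho<0$, the limit $\rho^2(5-\rho)/\bigl((1-2\rho)(1-\rho)^2(2-\rho)^2\bigr)$ is strictly positive, so for $k$ sufficiently large $S_2$ is bounded below by a positive constant $c(\rho)$ and $1/S_2 = O(1)$. By Lemma~\ref{L1}(i), $S_1$ converges to a constant in $(0,1)$; combined with the trivial range $C_j\in(0,1]$ this gives $|C_j - S_1|\le 1$. Assembling the three bounds yields the preliminary rate $|\mathscr{M}_k(C_j)| \le 2/(k\,c(\rho)) = O(1/k)$.

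To upgrade $O(1/k)$ to $O(1/k^{1+\omega})$ for small $\omega\in(0,0.1)$, I would split the index range at $j\sim k^{1-\omega}$ and exploit two complementary sources of decay. On the upper block $j>k^{1-\omega}$ the factor $W_j = 1-j/(k+1)$ supplies an additional $k^{-\omega}$ beyond the $1/k$ already present in $\tilde W_j$. On the lower block, a Riemann-sum expansion of $C_j = (j/(k+1))^{-\rho}$ around its weighted mean $S_1$, paralleling the integral computations in the proof of Lemma~\ref{L1}, shows that $|C_j-S_1|$ is itself of order $k^{-\omega}$ away from a narrow neighbourhood of the single point where $C_j$ crosses $S_1$. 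Balancing the two regimes, dividing by the constant lower bound on $S_2$, and absorbing the residual small neighbourhood into the $O(\cdot)$ yields the stated rate.

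The main technical obstacle will be the uniform $O(k^{-\omega})$ estimate on $|C_j - S_1|$ in the interior block: one needs a Riemann-sum approximation of $(j/(k+1))^{-\rho}$ with an error term that is uniform in $j$ rather than merely pointwise, and the cutoff $j\sim k^{1-\omega}$ must be chosen so that the crossover neighbourhood where $C_j\approx S_1$ contributes negligibly. Once this uniform estimate is in place, the remaining steps are algebraic and follow directly from Lemma~\ref{L1}.
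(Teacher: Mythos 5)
Your opening computations are correct and constitute the defensible core of the argument: $\sum_{j=1}^{k}W_j=k-\frac{1}{k+1}\cdot\frac{k(k+1)}{2}=k/2$ exactly, hence $\tilde W_j=2W_j/k\le 2/k$; combined with Lemma~\ref{L1}(ii), which makes $1/S_2=O(1)$ since $\rho^2(5-\rho)/\bigl((1-2\rho)(1-\rho)^2(2-\rho)^2\bigr)>0$ for $\rho<0$, and the crude bound $|C_j-S_1|\le 1$, this gives $|\mathscr{M}_k(C_j)|\le 2/(k\,c(\rho))=O(1/k)$ uniformly in $j$. The genuine gap is the proposed upgrade to $O(1/k^{1+\omega})$, and it cannot be repaired, because the stated rate is false. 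Take $j=\lfloor\theta k\rfloor$ with $\theta\in(0,1)$ fixed: then $\tilde W_j\sim 2(1-\theta)/k$ and $C_j\to\theta^{-\rho}$, so
$$\mathscr{M}_k(C_j)\sim\dfrac{2(1-\theta)\bigl(\theta^{-\rho}-\tfrac{2}{(1-\rho)(2-\rho)}\bigr)}{S_2(\infty)}\cdot\dfrac{1}{k},$$
which is of exact order $1/k$ for every $\theta$ except the single crossing value $\theta=\bigl(\tfrac{2}{(1-\rho)(2-\rho)}\bigr)^{-1/\rho}$. Your interior estimate has the geometry inverted: away from the crossing neighbourhood $|C_j-S_1|$ is bounded \emph{below} by a positive constant, not above by $k^{-\omega}$; it is only \emph{near} the crossing index that $|C_j-S_1|$ is small. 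Your split is also mis-specified: $W_j=1-j/(k+1)\le k^{-\omega}$ requires $j\ge k+1-(k+1)k^{-\omega}$, i.e.\ $j$ within distance $\sim k^{1-\omega}$ of $k$, whereas your upper block $j>k^{1-\omega}$ contains, e.g., $j=k/2$ where $W_j\approx 1/2$; with the split as written, the middle range enjoys neither source of decay. The same counterexample applies to the pointwise reading (fixed $j$, $k\to\infty$): there $C_j\to 0$ and $\mathscr{M}_k(C_j)\sim -2S_1/(S_2 k)$, again exactly order $1/k$.

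For comparison, the paper's own proof does not establish the claimed rate either: it asserts without derivation the sandwich $-k^{\rho}/|k^{1+\omega+\rho}+n^\rho|\le\mathscr{M}_k(C_j)\le k^{\rho}C_j^2/\bigl(S_2|k^{1+\omega+\rho}+n^\rho|\bigr)$, in which the sample size $n$ and the exponent $\omega$ appear unexplained, and then reads off the conclusion; no justification of either inequality is offered, and the left bound at $j=\lfloor\theta k\rfloor$ contradicts the asymptotics above. So you should not expect your route to close the gap by mimicking the paper: the honest, provable version of this lemma is the uniform bound $\mathscr{M}_k(C_j)=O(1/k)$ that your first two paragraphs already deliver (and which is sharp), and any downstream use --- in particular the bound on $\mu_j k\,\mathscr{M}_k(C_j)$ in Lemma~\ref{L4} --- would need to be rechecked against $O(1)$ rather than $O(k^{-\omega})$ for that quantity.
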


\begin{proof}[Proof of Lemma 3]  We  observe that  $$-k^{\rho}/|k^{1+\omega+\rho}+n^\rho| \le \mathscr{M}_{k}(C_{j})\le k^{\rho}C_{j}^2/S_2|k^{1+\omega+\rho}+n^\rho|.$$
	Therefore, we  have  $\mathscr{M}_{k}(C_{j}) \to O(1/k^{1+\omega}),0<\omega<0.1$ as $k\to \infty$,  which  completes  the  proof  of  Lemma~\ref{L2}.
\end{proof}

\begin{lemma}\label{L4}
	  Suppose that $Z_{1}, Z_{2},Z_3,...$ are  such  that    $\hat\rho<0$  is  consistent  estimator of  $\rho<0$ and        $\EX\Big[S(\hat\rho)\Big]<\infty$. Then 
	\begin{itemize}
		\item[(i)]  conditional  on  $\{\hat\rho=\rho\}$  the moment generating function (m.g.f) of  $\hat b_{n,k}(\hat\rho)$is given as
		$$	M_{\hat{b}_{n,k}(\rho)}(kt)=\prod_{j=1}^{k}M_{Z_{j}}\left(k\mathscr{M}_{k}(C_{j})t\right)=\prod_{j=1}^{k}\dfrac{1}{1-\mu_{j}k\mathscr{M}_{k}(C_{j})t}.$$
	.

\item[(ii)]	 for any $\epsilon>0$ 
	\begin{equation}
	\lim\limits_{k \rightarrow \infty}\prob\left(\sqrt{k}\hat{b}_{n,k}(\hat{\rho})\ge \epsilon\right)=0.
	\end{equation}

\end{itemize}
\end{lemma}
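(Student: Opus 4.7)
The plan is to derive the moment generating function (m.g.f.) of $\hat b_{n,k}(\hat\rho)$ from the linear representation of the WLS slope, and then convert the resulting product formula into the probability bound via a Chernoff/Markov argument.

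For (i), I would rewrite (\ref{BWLS}) in the compact linear form $\hat b_{n,k}(\rho) = \sum_{j=1}^{k} \mathscr{M}_k(C_j) Z_j$, which is immediate from the definition (\ref{E19}). Conditioning on $\{\hat\rho = \rho\}$ makes the weights $\mathscr{M}_k(C_j)$ deterministic constants, while under the regression representation (\ref{E11}) the $Z_j$'s are independent with $Z_j\sim \mathrm{Exp}(1/\mu_j)$ and common m.g.f. $M_{Z_j}(s) = (1 - \mu_j s)^{-1}$ for $s < 1/\mu_j$. Combining the scaling identity $M_{aX}(t) = M_X(at)$ with the product rule for m.g.f.'s of sums of independent variables immediately yields
\begin{equation*}
M_{\hat b_{n,k}(\rho)}(kt) = \prod_{j=1}^{k} M_{Z_j}\bigl(k \mathscr{M}_k(C_j) t\bigr) = \prod_{j=1}^{k} \frac{1}{1 - \mu_j k \mathscr{M}_k(C_j) t}.
\end{equation*}
This is valid for $|t|$ small enough that each denominator is strictly positive, and Lemma~\ref{L2}, together with boundedness of $\mu_j$ (since $b_{n,k}\to 0$), ensures such $t$ exists for all $k$ sufficiently large.

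For (ii), I would apply (i) with $t$ replaced by $t/\sqrt k$ to obtain the m.g.f. of $\sqrt k\, \hat b_{n,k}(\hat\rho)$ conditional on $\{\hat\rho = \rho\}$, namely $\prod_j (1-\mu_j\sqrt k\,\mathscr{M}_k(C_j) t)^{-1}$. Taking logarithms and expanding $-\log(1-x) = x + x^2/2 + O(x^3)$, the leading term is $\sqrt k\, t\sum_j \mu_j \mathscr{M}_k(C_j)$. A direct check using (\ref{E19}) shows $\sum_j \mathscr{M}_k(C_j) = 0$ and $\sum_j C_j \mathscr{M}_k(C_j) = 1$, so with $\mu_j = \gamma + b_{n,k} C_j$ the sum equals $b_{n,k}$; by the hypothesis $\sqrt k\, b_{n,k}\to_p 0$ the leading term vanishes. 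For the quadratic and higher-order terms, Lemma~\ref{L2} supplies $\mathscr{M}_k(C_j) = O(k^{-1-\omega})$, whence $k\sum_j \mu_j^2 \mathscr{M}_k(C_j)^2 = O(k^{-2\omega}) \to 0$, with higher orders decaying faster. Hence $\log M_{\sqrt k\,\hat b_{n,k}(\rho)}(t) \to 0$ for $t$ in a neighbourhood of the origin. A Chernoff-type bound $\prob(\sqrt k\,\hat b_{n,k}(\rho) \ge \epsilon) \le e^{-t\epsilon}\, M_{\sqrt k\,\hat b_{n,k}(\rho)}(t)$, followed by $k\to\infty$ and then $t\to\infty$, delivers the conditional statement. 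Finally, the assumption $\EX[S(\hat\rho)] < \infty$ together with consistency of $\hat\rho$ allows the conditional conclusion to be integrated over $\hat\rho$ by dominated convergence, giving the unconditional result.

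The main obstacle I anticipate is controlling the tail of the Taylor expansion of $-\log(1-x)$ uniformly in $j$: this requires Lemma~\ref{L2}'s uniform rate and the fact that each $|\mu_j\sqrt k\,\mathscr{M}_k(C_j) t|$ stays well below one for large $k$, so that the remainder can be majorized by a geometric series. A secondary technical point is the passage from the conditional statement on $\{\hat\rho=\rho\}$ to the unconditional one, which is precisely where the integrability $\EX[S(\hat\rho)] < \infty$ is invoked to legitimize the limit interchange.
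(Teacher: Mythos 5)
Your part~(i) coincides with the paper's own argument: both of you write $\hat b_{n,k}(\rho)=\sum_{j=1}^{k}\mathscr{M}_{k}(C_{j})Z_{j}$, condition on $\{\hat\rho=\rho\}$ so that the weights become deterministic, and factor the m.g.f.\ of a linear combination of independent exponentials via $M_{aX}(t)=M_{X}(at)$; nothing further to compare there.

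For part~(ii) you take a genuinely different route. The paper works at large-deviation speed $k$: it sandwiches $\mu_{j}k\mathscr{M}_{k}(C_{j})$ in $[-\delta,\delta]$ to get $\lim_{k\to\infty}\frac{1}{k}\log M_{\hat b_{n,k}(\rho)}(kt)=0$, invokes the G\"artner--Ellis theorem to obtain an LDP for $\hat b_{n,k}$ with the degenerate rate function ($I(0)=0$, $I(x)=\infty$ otherwise), and then applies the LDP upper bound to the event $\{\hat b_{n,k}>\epsilon/\sqrt{k}\}$. You instead work directly at the $\sqrt{k}$ scale: you expand $\log M_{\sqrt{k}\hat b_{n,k}(\rho)}(t)$, use the exact weighted-least-squares identities $\sum_{j}\mathscr{M}_{k}(C_{j})=0$ and $\sum_{j}\mathscr{M}_{k}(C_{j})C_{j}=1$ (both correct) to identify the linear term as exactly $\sqrt{k}\,b_{n,k}\,t$, kill the quadratic and higher terms with Lemma~2's rate, and close with a Chernoff bound, sending $k\to\infty$ before $t\to\infty$. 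Your route buys two things. First, Markov's inequality holds for every $k$ and every event, so the $k$-dependent threshold $\epsilon/\sqrt{k}$ costs nothing; in the paper this is the most delicate step, since an LDP upper bound is stated for fixed Borel sets while the sets $(\epsilon/\sqrt{k},\infty)$ shrink toward the origin, where the rate function vanishes. Second, your identity computation makes the centring honest: $\EX\big[\hat b_{n,k}(\rho)\big]=\sum_{j}\mu_{j}\mathscr{M}_{k}(C_{j})=b_{n,k}$ exactly, which is precisely why you must import the hypothesis $\sqrt{k}\,b_{n,k}\to_{p}0$. That hypothesis appears in Theorem~1 but not in the statement of Lemma~4, so strictly you prove the lemma under an extra assumption; since the lemma is only ever used inside Theorem~1 where that assumption is in force, this is harmless, and your expansion shows the assumption cannot be dispensed with --- a dependence the paper's speed-$k$ normalisation $\frac{1}{k}\log M(kt)$ wipes out and thereby conceals. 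One caveat you share with the paper: the vanishing of your quadratic term $k\sum_{j}\mu_{j}^{2}\mathscr{M}_{k}(C_{j})^{2}=O(k^{-2\omega})$ needs Lemma~2's uniform rate $\mathscr{M}_{k}(C_{j})=O(k^{-1-\omega})$, which sits uneasily beside your own identity $\sum_{j}\mathscr{M}_{k}(C_{j})C_{j}=1$ (a uniform $O(k^{-1-\omega})$ bound with $0\le C_{j}\le 1$ would force that sum to $0$, not $1$); both proofs stand or fall with Lemma~2 on this point, so this is a weakness inherited from the paper rather than a gap in your argument relative to it.
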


\begin{proof}[\bf Proof of Lemma~\ref{L4}](i)
	 Observe  that  conditional  on   $\{\hat\rho=\rho\}$, the  sequence  of  random  variables   $Z_j=\gamma+b_{k,n}C_j+\epsilon_j$ are  independent  exponential  distributed  with mean $\mu_j(\rho)=\EX(Z_j|\hat\rho=\rho)=\gamma+\hat{b}_{n,k}C_j.$ Note  that  by  $\hat \rho$  is  a  consistent  estimator  of  $\rho$  with   $\EX\Big[S(\hat\rho)\Big]<\infty$ and  the  Minkowski's  inequality  we  have   $\EX(|Z_i|^p) <\infty$   for all $ p>0$.   Let   $	Y=a_{1}Z_{1}+a_{2}Z_{2}+...+a_{n}Z_{n}
	  =\sum_{j=1}^{n}a_{i}Z_{i},$ where  \, $a_{i} \in \mathbb{R}, i \ge 1.$  Then  conditional  m.g.f of   $Y$ given   $\{\hat\rho=\rho\}$  is 
	
	\begin{align}
	M_{Y,\rho}(t)  \nonumber
	=\EX (e^{Yt}\big|\hat\rho=\rho)&=\EX\left(e^{t\left(a_{1}Z_{1}+a_{2}Z_{2}+...+a_{n}Z_{n}\right)}\Big|\hat \rho=\rho\right)\\ \nonumber
	&=\EX\left(e^{a_{1}Z_{1}t}e^{a_{2}Z_{2}t}... e^{a_{n}Z_{n}t}\Big|\hat\rho=\rho\right)\\ \nonumber 
	&=\prod_{i=1}^{n}M_{Z_{i},\rho}(a_{i}t) \label{MGF}. 
	\end{align}
	
Now,   observe  from (\ref{E19})  and  (\ref{BWLS})   that $\hat{b}_{n,k}(\hat\rho)$ is   expressible  as  
	$\displaystyle 
	\hat{b}_{n,k}(\hat\rho)=\sum_{j=1}^{k}\mathscr{M}_{k}(C_{j})Z_{j}.
	$  	
Therefore,  the  moment generating function of $Z_{k}$  given  $\{\hat \rho=\rho\}$   is  
	$
	M_{Z_{k}}(t)=1/(1-\mu_{k}t)
	$
	and  the moment generating function of $\hat{b}_{n,k}$ of  given  $\displaystyle \{\hat \rho=\rho\}$   is obtained  by
	\begin{align*}
	M_{\hat{b}_{n,k}(\rho)}(t)=\prod_{j=1}^{k}M_{Z_{j}}\left(\mathscr{M}_{k}(C_{j})t\right)	\end{align*}
	
	\begin{equation*}
	\therefore M_{\hat{b}_{n,k}(\rho)}(kt)=\prod_{j=1}^{k}\dfrac{1}{1-\mu_{j}(\rho)k\mathscr{M}_{k}(C_{j})t}.
	\end{equation*}

\end{proof}

\begin{proof}[\bf Proof of Lemma~\ref{L4}](ii).Note  from Lemma~\ref{L2}  that, $\mathscr{M}_{k}(C_{j})\to O(1/k^{1+\omega})$ as $k\to \infty$; hence we   have  $\mu_{j}k\mathscr{M}_{k}(C_{j})\to O(1/k^{\omega})$ as $k \to \infty$,  $k/n \to 0$, as $n\to \infty.$  Therefore, we  can  find  some $ \delta>0$ such that
	$\displaystyle -\delta \le\mu_{j}k\mathscr{M}_{k}(C_{j})\le \delta.$  
	This  implies, 

	$$ \log\left(1-\delta+\delta^2-\delta^3+...\right)\le \lim\limits_{k \rightarrow \infty} \dfrac{1}{k}\log M_{\hat{b}_{n,k}}(kt) \le \log\left(1+\delta+\delta^2+\delta^3+...\right)$$

	Now  taking  the  limit  $\delta\downarrow 0$  and  applying  the  Sandwich Theorem,  we  have    
	
	\begin{equation}\label{E23}
	\lim\limits_{k \rightarrow \infty} \dfrac{1}{k}\log M_{\hat{b}_{n,k}(\rho)}(kt)=0.
	\end{equation}
\end{proof}

Equation~(\ref{E23}) means that the limiting logarithmic conditional  moment generating function converges to $0$ with speed $k$. Hence by the G$\ddot{a}$rther-Ellis theorem \citep[Theorem 2.3.6, p.~44]{Dembo1998},  $\displaystyle \{\hat \rho=\rho\},$ conditional  on the  $\hat{b}_{n,k}$   satisfies a large deviation principle (LDP) in the space of non-negative real numbers with speed $k$ and rate function $I(x)$ given  by  

\begin{align*}
I(x)
&=\sup\limits_{\lambda\in \mathbb{R}}\left\{\lambda x-0\right\}\\
&=
\begin{cases}
0 \hspace{8pt} \text{if} \hspace{6pt} x=0
\\
\infty \hspace{6pt} \text{if} \hspace{6pt} x\neq 0
\end{cases}
\end{align*}
where $x\in [0,\infty)$ and $\lambda>0$. If   conditional  on the event $\{\hat\rho=\rho\}$,  $\hat b_{n,k}(\hat\rho)$ obeys LDP  in  the  space  $[0,\infty)$ with speed $k$ and a  rate function $I(x)$, then for every  $\epsilon >0$,  we  have 
\begin{equation}
\begin{aligned}
&\limsup_{k\to\infty}\frac{1}{k}\log\mathbb{P}\left(\sqrt{k}\hat{b}_{n,k}(\hat\rho)>\epsilon\right)\\
&=\limsup_{k\to\infty}\frac{1}{k}\log\mathbb{P}\left(\hat{b}_{n,k}(\hat\rho)> \tau_k\Big| \, \hat\rho=\rho\right)+\limsup_{k\to\infty}\frac{1}{k}\log\mathbb{P}\Big(\hat\rho=\rho\Big)\\
&\le -\inf_{x\in (0,\infty)}I(x)+\limsup_{k\to\infty}\frac 1 k\log (1)\\
&=-\inf_{x\in (0,\infty)}I(x),
\end{aligned}
\end{equation}

where $\tau_k= \dfrac{\epsilon}{\sqrt{k}}.$
As  the  non- typical behaviour of the rate function is when $x\ne0$, we  have  
\begin{equation*}
\lim\limits_{k \to \infty}\mathbb{P}\left(\sqrt{k}\hat{b}_{n,k}(\hat\rho)>\epsilon\right)\le 0.
\end{equation*}
Thus, $\sqrt{k}\hat{b}_{n,k}(\hat\rho)\to_{p} 0,$ as $k\to \infty$  which  ends  the  Proof  of  Lemma~\ref{L4}. 

\begin{lemma}\label{L5}
Suppose that $Z_{1}, Z_{2},...$ are  such  that    $\hat\rho<0$  is  consistent  estimator of  $\rho<0$ and    $\EX\Big[S(\hat\rho)\Big]$   is  finite.   
Then,
\begin{itemize}

	\item  [(i)]    $\EX (Z_{j})=\mu_{j}<\infty,$ \,  and     $\lim_{k\to\infty}Var(Z_{k})=\gamma^2,$

 \item [(ii)] and   there exists $\delta>0,$ such that  
$\displaystyle  \lim\limits_{k \rightarrow \infty}\dfrac{1}{S_{n}^{2+\delta}}\sum_{j=1}^{k}\EX\left(|Z_{j}-\mu_j|^{2+\delta}\right)=0.$
	
	\end{itemize}
\end{lemma}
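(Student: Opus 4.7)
The statement of Lemma~5 is the classical pair of ingredients needed to invoke the Lyapunov central limit theorem for the triangular array $\{Z_j\}_{j=1}^{k}$, so my plan is to prove both items exactly in that spirit, exploiting the exponential structure of the $Z_j$'s that was already set up for Lemma~\ref{L4}.

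For part (i), I would condition on $\{\hat\rho=\rho\}$ and use the representation $Z_j = \gamma + b_{n,k}C_j + \epsilon_j$ together with the fact, already invoked in the proof of Lemma~\ref{L4}(i), that conditionally the $Z_j$'s are independent exponential random variables with mean $\mu_j(\rho) = \gamma + b_{n,k}C_j$. Since $C_j\in[0,1]$ and the hypothesis $\EX[S(\hat\rho)]<\infty$ together with Minkowski's inequality give $\EX(|Z_j|^p)<\infty$ for every $p>0$, finiteness of $\mu_j$ is immediate. Because $\mathrm{Var}(Z_j\mid\hat\rho=\rho)=\mu_j(\rho)^2$ for an exponential law, and because $b_{n,k}\to 0$ as $k,n\to\infty$ by assumption (so $\mu_j\to\gamma$ uniformly in $j$), dominated convergence after unconditioning yields $\mathrm{Var}(Z_k)\to\gamma^2$.

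For part (ii), I would take any $\delta\in(0,1]$ and compute the central absolute moments of an exponential. For $X\sim\mathrm{Exp}(\mu)$ one has a clean bound $\EX\left(|X-\mu|^{2+\delta}\right)\le c_\delta\,\mu^{2+\delta}$ with $c_\delta$ depending only on $\delta$. Conditional on $\hat\rho=\rho$, this gives
\begin{equation*}
\sum_{j=1}^{k}\EX\bigl(|Z_j-\mu_j|^{2+\delta}\bigr)\le c_\delta \sum_{j=1}^{k}\mu_j(\rho)^{2+\delta} = O(k),
\end{equation*}
uniformly in $\rho$, since $\mu_j(\rho)\to\gamma$. Meanwhile from part (i) the denominator $S_n^2=\sum_{j=1}^{k}\mathrm{Var}(Z_j)$ behaves like $k\gamma^2$ (to leading order), so $S_n^{2+\delta}\sim(k\gamma^2)^{1+\delta/2}$. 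Combining the two estimates gives a ratio of order $k^{-\delta/2}$, which vanishes as $k\to\infty$. Taking expectation over $\hat\rho$ (permitted by $\EX[S(\hat\rho)]<\infty$) completes the proof of Lyapunov's condition.

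The main obstacle I anticipate is not any one of the above steps individually but making the conditional-on-$\hat\rho$ arguments rigorous: one must justify that the uniform control of $\mu_j(\rho)$ in $j$ survives when averaging over the law of $\hat\rho$, which is where the integrability hypothesis $\EX[S(\hat\rho)]<\infty$ does the real work. A secondary technicality is to confirm that the $O(k)$ bound on the sum of $(2+\delta)$-moments is genuinely uniform in $n$ (not just pointwise in $k$), so that the ratio $k^{-\delta/2}\to 0$ can be taken across the joint limit $k,n\to\infty$ with $k/n\to0$; this follows from $b_{n,k}\to 0$ and $C_j\le 1$, but needs to be stated explicitly before invoking dominated convergence.
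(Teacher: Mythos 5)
Your proposal is correct and takes essentially the same route as the paper: condition on $\hat\rho$, exploit the conditional exponential structure so that $\EX\left(|Z_{j}-\mu_{j}|^{2+\delta}\right)$ scales as a constant (depending only on $\delta$) times $\mu_{j}^{2+\delta}$, combine this with $S_{n}^{2}=k\gamma^{2}$ to obtain a Lyapunov ratio of order $k^{-\delta/2}$, and uncondition using $\EX\big[S(\hat\rho)\big]<\infty$. The only cosmetic differences are that you get the moment bound by the scaling identity for exponentials (the paper instead computes an explicit constant $\eta(\delta)$ by direct integration) and that you control $\mu_{j}$ via $b_{n,k}\to 0$ and dominated convergence, whereas the paper bounds $\EX\big[\hat b_{n,k}(\hat\rho)\big]$ explicitly through Lemma~\ref{L2} and the Hill estimator to get $\mu_{j}\le\gamma+o(1)$.
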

\begin{proof}[Proof of Lemma~\ref{L5}]
We  observe  that  $Z_{1}, Z_{2}, Z_{3},...$ are independent but not identical  distributed  random  variables.
\begin{itemize}
	\item [(i)] $$\begin{aligned}  \EX(Z_j)=\EX\Big\{\EX\left(Z_{j}\big| \hat \rho\right)\Big\}&= \EX\Big\{\gamma+\hat b_{n,k}(\hat\rho)C_j(\hat\rho)\Big\}\le\gamma+\EX(\hat b_{n,k}(\hat\rho))\\
	& \le\gamma+ \EX\Big[\frac{(1-2\hat\rho){(1-\hat\rho)^2}(2-\hat\rho)^2\sum_{j=1}^{k}C_j(\hat\rho)Z_j/k}{ (\hat\rho^2(5-\hat\rho))|k^{1+\omega+\hat\rho}+n^{\hat\rho}|}+o(1)\Big]\\
	&\le\gamma+\EX\Big[\frac{(1-2\hat\rho){(1-\hat\rho)^2}(2-\hat\rho)^2}{\hat\rho^2(5-\hat\rho))k^{\omega}} \EX\Big[\sum_{j=1}^{k}C_j(\hat\rho)Z_j/k \Big]+o(1)\,\,\, \mbox{a.s.}\\
		&\le\gamma+\EX\Big[\frac{(1-2\hat\rho){(1-\hat\rho)^2}(2-\hat\rho)^2}{\hat\rho^2(5-\hat\rho)}\Big] \Big[\frac{\EX(\gamma_k^{H})}{k^{\omega}}\Big] +o(1)\,\,\, \mbox{a.s.}\\
	&\le\gamma+ o(1/k^{\omega})+o(1), 	
	\end{aligned}$$
	
	where  $0<\omega<0.1$    and  $\gamma_k^{H}$  is  the  Hill  estimator.
	\item[(ii)] Let  $f$  be  the  probability  density  function  of  $ Z_j  $   and  observe  that  we
	
	\begin{align*}
	\EX\left(|Z_{j}-\mu_{j}|^{2+\delta}\big| \hat \rho=\rho\right)
	&=\int_{0}^{\infty}|Z_{j}-\mu_{k}|^{2+\delta}f(z_{j})\,d{z}_{j}\\
	&=-\int_{0}^{\mu_{j}}\left(Z_{j}-\mu_{j}\right)^{2+\delta}f(z_{j})\,d{z}_{j}+\int_{\mu_{j}}^{\infty}\left(Z_{j}-\mu_{kj}\right)^{2+\delta}f(z_{j})\,d{z}_{j}\\
	&=\dfrac{e^{-1}}{\mu_{j}}\left\{\dfrac{(-\mu_{j})^{3+\delta}(7+2\delta)}{(3+\delta)(4+\delta)}+\mu_{j}^{3+\delta}(2+\delta)!\right\}\{1+O(1)\}\\
	&=\mu_{j}^{2+\delta}\eta(\delta)\{1+O(1)\},
\end{align*}
where 
$\eta(\delta)=\dfrac{1}{e}\left\{\dfrac{(-1)^{3+\delta}(7+2\delta)}{(3+\delta)(4+\delta)}+(2+\delta)!\right\}<\infty$ for  $0<\delta<\infty$.  Therefore  we  have  that   $$\EX\left(|Z_{j}-\mu_{j}|^{2+\delta}\right)=\EX\Big\{\EX\left(|Z_{j}-\mu_{j}|^{2+\delta}\big| \hat \rho\right)\Big\}=\eta(\delta)\{1+O(1)\} \mu_{j}^{2+\delta}.  $$

	Now define
	$T_{j}=Z_{j}-\mu_{j}, 1\le j \le k,$   and  note  that  
	
	$$S_{n}^2 =Var\left(\sum_{j=1}^{k}T_{j}\right)=\sum_{j=1}^{k}Var(Z_{j}-\mu_{j})
	=\sum_{j=1}^{k}Var(Z_{j})=\sum_{j=1}^{k}Var(\epsilon_{j})= k\gamma^2.$$

	Hence,
	$$\begin{aligned}
	\lim\limits_{k\to \infty}\dfrac{1}{S_{n}^{2+\delta}}\sum_{j=1}^{k}\EX\left(|Z_{j}-\mu_{j}|^{2+\delta}\right)
	&\le\lim\limits_{k \to\infty}\dfrac{k\eta(\delta)\{1+o(1)\}\Big(\gamma+ o(1/k^{\omega})+o(1)\Big)^{2+\delta}}
	{(k\gamma^2)^{1+\delta/2}}\\
	&=0
	\end{aligned}$$
	
\end{itemize}	
\end{proof}

We write  $\displaystyle \lim_{k\to\infty}\EX\big(Y_k):=\mu$,   $\displaystyle \lim_{k\to\infty}Var(Y_k):=\sigma^2$  and  observe  from   Lemma (\ref{L4})  and (\ref{L5})  the  Proof of  Theorem~\ref{TH1}  as  follows: Lemma (\ref{L4}) establishes that $\sqrt{k}\hat{b}_{n,k}$ converges in probability to $0$ as $k$ becomes large. Lemma (\ref{L5}) shows that Lyapunov's condition for the central limit theorem holds \citep[p.~359-362]{Billingsley1983}, that is, $Y_k:=\sqrt{3k}\left(\hat{\gamma}_{wls}^{+}-\gamma\right)/{2\gamma} \to_{d}N(\mu,\sigma^2\big)$, as $k\to \infty$. Therefore, all we need to complete the proof of Theorem~\ref{TH1} is to estimate the parameters $\mu$ and $\sigma^2$ under the condition $\sqrt{k}\hat{b}_{n,k}\to_{p} 0 $.\\

Recall  that  $Y_{k}$ converges in distribution to normally distributed  random  variable with mean $\mu$ and variance $\sigma^2.$ 
We  compute the  parameters $\mu$  and  $\sigma^2$  as follows:
\begin{align*}
\mu=\lim_{k\to\infty}\EX\big(Y_k)&=\lim_{k\to\infty}\Big[\frac{\sqrt{3k}}{2\gamma}\Big(\EX(\hat{\gamma}_{wls}^{+})-\gamma\Big)\Big]\\
&=\lim_{k\to\infty}\Big[\frac{\sqrt{3k}}{2\gamma}\Big(\sum_{j=1}^{k}\tilde{W}_{j}\EX(Z_{j})-\gamma\Big)\Big] \hspace{18pt}    (\text{since} \hspace{6pt} \sqrt{k}\hat{b}_{n,k}\to 0 \hspace{6pt} \text{as} \hspace{6pt} k\to \infty)\\
&=\lim_{k\to\infty}\Big[\frac{\sqrt{3k}}{2\gamma}\sum_{j=1}^{k}\gamma\tilde{W}_{j}+\frac{\sqrt{3k}}{2\gamma}\hat{b}_{n,k}\sum_{j=1}^{k}\tilde{W}_{j}C_{j}-\frac{\sqrt{3k}}{2\gamma}\gamma\Big]=0\end{align*}

Further,  we  have 
$$\displaystyle \sigma^2=\lim_{k\to\infty}Var(Y_{k})
=\lim_{k\to\infty}\Big[\frac{3k}{4\gamma^2}\sum_{j=1}^{k}\tilde{W}_{j}^{2}Var(Z_{j})\Big]=\lim_{k\to\infty}\Big[\frac{3k}{4}\sum_{j=1}^{k}\tilde{W}_{j}^2\Big]=\lim_{k\to\infty}\Big[ 1+O(1/k)\Big]=1.$$ 
Hence $\sqrt{k}\hat{b}_{n,k}\to_{p} 0$  and $Y_{k}\to_{d} N(0,1),$ as $k\to \infty$, as  required.

\section{\textbf{Results and Discussion}}\label{S3}
We present a simulation study that compares the performance of the proposed extreme value index estimator with some existing estimators in the literature.

\begin{scriptsize}
\begin{table}[!htbp]
	\caption{Heavy-tailed distributions from the Pareto-type distribution}
	\centering
	\begin{tabular}{*{7}{c}}
		\hline
		Distribution & $1-F(x)$ & $\ell_{F}(x)$ & $\gamma$\\ \hline 
		Burr & $\left(1+x^{\tau}\right)^{-\lambda}$ & $\left(1+x^{-\tau}\right)^{-\lambda}$ & $1/\lambda\tau$ \\
		Fr\'{e}chet & $1-exp(-x^{-\alpha})$ & $1-\frac{x^{-\alpha}}{2}+O(x^{-\alpha})$ & $1/\alpha$ \\ 
		Log-Gamma & $\int_{x}^{\infty}\frac{\lambda^{\alpha}}{\varGamma(\alpha)}w^{\lambda-1}\left(\log w\right)^{\alpha-1}\,dw$ & $\frac{\lambda^{\alpha-1}}{\varGamma(\alpha)}\left(\log x\right)^{\alpha-1}$$\left(1+\frac{\alpha-1}{\lambda}\frac{1}{\log x}+0\left(\frac{1}{\log x}\right)\right)$ & $1/\lambda$ \\ \hline
	\end{tabular}
	\label{Table:T1}
\end{table}
\end{scriptsize}

\subsection{Simulation study }
In this simulation study, the reduced-bias weighted least square estimator is compared to the Hill \citep{hill1975}, the least square \citep{Beirlant2002}, the Bias-corrected Hill \citep{Caeiro2005a} and the ridge regression \citep{Buitendag2018} estimators. The estimators are illustrated for three distributions from the Pareto-type distribution as shown in Table~\ref{Table:T1}, taking 1000 repetitions of samples of sizes 50 and 200. The bias, MSE and the average EVIs are plotted as a function of $k$ to investigate the behaviour of the estimators' sample paths. For each distribution, we consider three different values of $\gamma \in (0,1]$ to investigate the performance of the proposed estimator for varying values of $\gamma$. In particular, we consider $\gamma = 0.10, 0.50 $ and $1.00$, since these values are usually used in simulation studies (see for example \citealp{minkah2021robust,Beirlant2019,minkah2018extreme, cabral2020comparison}). The following distributions are used:
\begin{enumerate}
	\item[$\bullet$] Burr$\left(\eta,\tau,\lambda\right)$
	\begin{itemize}
		\item[\checkmark]$\eta=1, \tau=\sqrt{10}$ and $\lambda=\sqrt{10}$, so that $\gamma=0.10$.
		\item[\checkmark]$\eta=1, \tau=\sqrt{2}$ and $\lambda=\sqrt{2}$, so that $\gamma=0.50$.
		\item[\checkmark]$\eta=1, \tau=2$ and $\lambda=1/2$, so that $\gamma=1.00$.
	\end{itemize}
\item[$\bullet$] Fr\'{e}chet$(\alpha)$
\begin{itemize}
	\item[\checkmark]$\alpha=10,2$ and $1$, so that $\gamma=0.10, 0.50$ and $1.00$, respectively.
\end{itemize}
\item[$\bullet$]Log Gamma$\left(\lambda,\alpha\right)$
\begin{itemize}
	\item[\checkmark]$\lambda=10$ and $\alpha=2$, so that $\gamma=0.10$
	\item[\checkmark]$\lambda=2$ and $\alpha=2$, so that $\gamma=0.50$
	\item[\checkmark]$\lambda=1$ and $\alpha=2$, so that $\gamma=1.00$
\end{itemize}
\end{enumerate}


Table~\ref{Table:T2} presents the notations of the extreme value index estimators used in the simulation study.

\begin{table}[!htbp]
	\caption{Notations of the Estimators}
	\begin{center}
		\begin{tabular}{lc}
			
			\hline
			\textbf{Estimators} & \textbf{Notation}\\
			\hline
			Hill & HILL\\
			Bias-corrected Hill & BCHILL \\
			Least square & LS \\
			Ridge regression & RR\\
			reduced-bias Weighted least square & WLS \\
			\hline
		\end{tabular}
		\label{Table:T2}
	\end{center}
\end{table}

The simulation results for the Burr distribution are presented in Figures~\ref{Fig:Figure 1} - \ref{Fig:Figure 3}, and that of the Fr\'{e}chet distribution are shown in Figures~\ref{Fig:Figure 4} - \ref{Fig:Figure 6}.The sample paths of the proposed estimator, WLS, are close to that of LS in most cases, and this means that the two estimators are competitively close to each other. 

In the case of the Burr distribution, WLS generally outperforms the other estimators in terms of MSE and bias for small EVI ($\gamma=0.1$), and the WLS estimator is mostly the second-best estimator in terms of bias and MSE for $\gamma=0.50$. However, in the case of $\gamma=1.00$, the sample paths of WLS and BCHILL are generally close for small to medium values of $k$. The WLS estimator mostly has the lowest bias and MSE among the three reduced-bias estimators (i.e., WLS, LS and RR) for the Burr distribution.

Furthermore, in the case of the Fr\'{e}chet distribution, the WLS is mostly the best performing estimator in terms of bias. The sample paths of WLS and LS are mainly close to each other. However, WLS generally appears to outperform LS in terms of bias. For a large sample, i.e., $n=200$, the WLS estimator outperforms the other estimators in terms of MSE for large values of $k$.

Additionally, the performance of the estimators on samples generated from the Log-Gamma distribution is presented in the appendix. The proposed estimator, WLS, generally is the second-best performing estimator to the BCHILL estimator in terms of bias and MSE across all samples.


The MSE plots of the WLS estimator are stable over the middle region of $k$ and thus makes the determination of the optimal $k$, defined by $k_{0}=\arg\min_{k}$$\left[MSE\{\hat{\gamma}_{wls}^{+}(k)\}\right]$, easier.

Given the role of $\rho$, the WLS was also compared to the other estimators, using the \citet{alves2003estimation} estimator for $\rho$, and although it was not universally the best in terms of bias and MSE, it performed well compared to the other estimator. However, for brevity we omit these results.

In summary, the reduced-bias weighted least squares estimator generally yields lower bias and MSE, which are stable over a long range of $k$ values. Thus, it can be considered an appropriate estimator of the extreme value index for samples generated from the Pareto-type distribution.

\begin{figure}[!htbp]
	\centering
	\subfloat{\includegraphics[width=5.5cm,height=4cm]{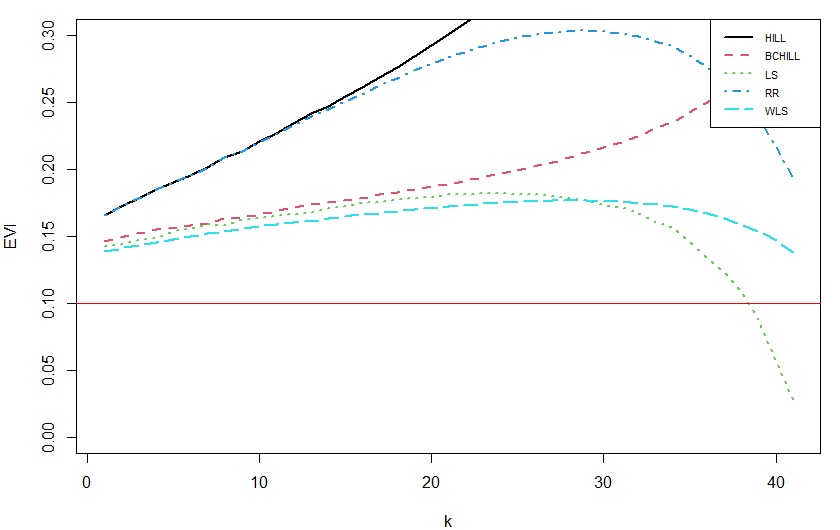}}\hfill
	\subfloat{\includegraphics[width=5.5cm,height=4cm]{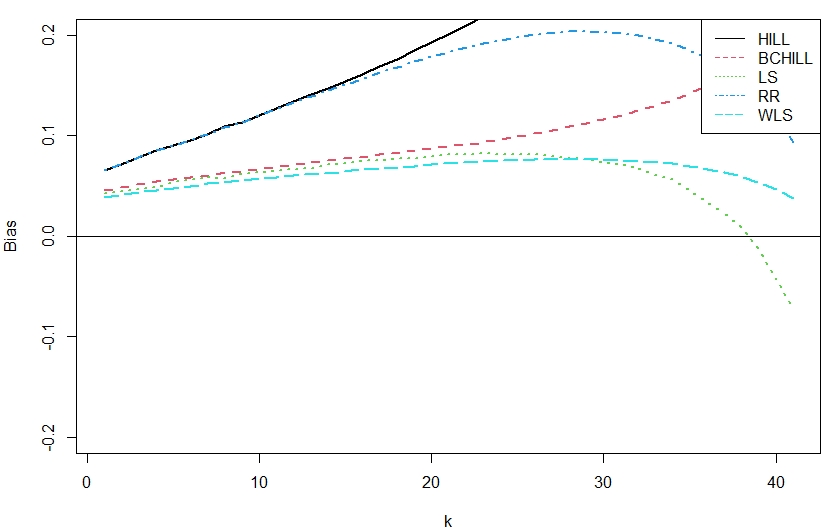}}\hfill
	\subfloat{\includegraphics[width=5.5cm,height=4cm]{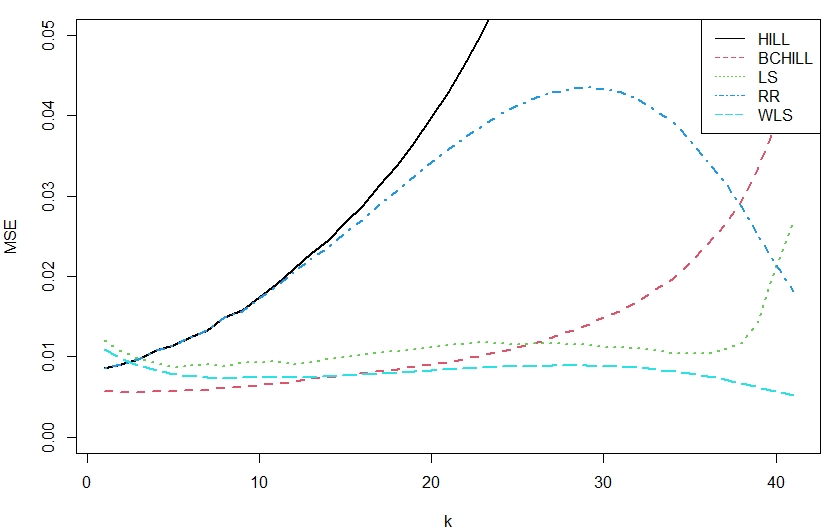}}\hfill
	\\[\smallskipamount]
	\subfloat{\includegraphics[width=5.5cm,height=4cm]{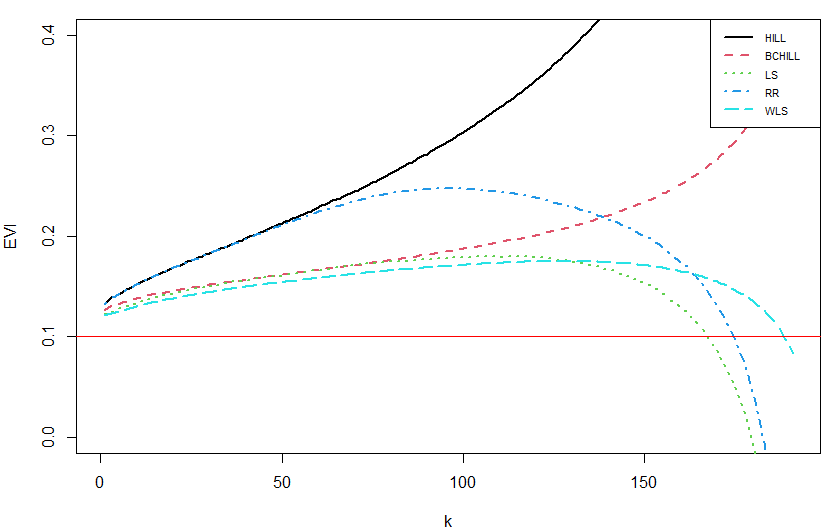}}\hfill
	\subfloat{\includegraphics[width=5.5cm,height=4cm]{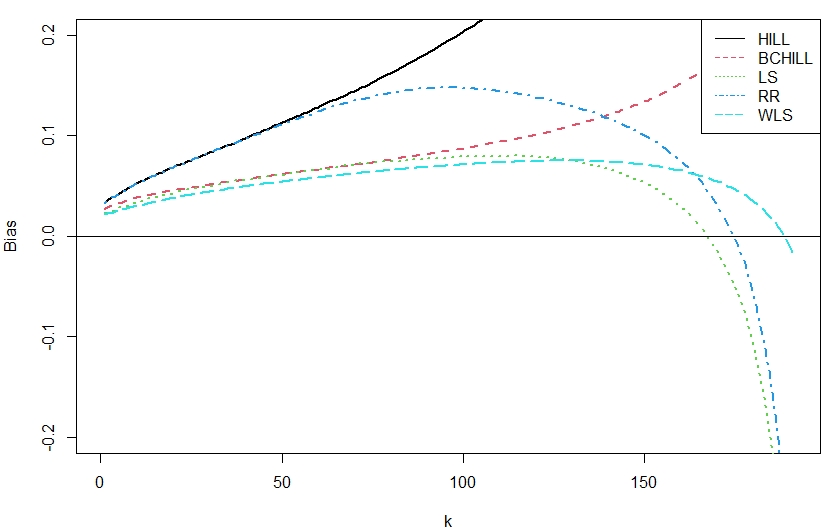}}\hfill
	\subfloat{\includegraphics[width=5.5cm,height=4cm]{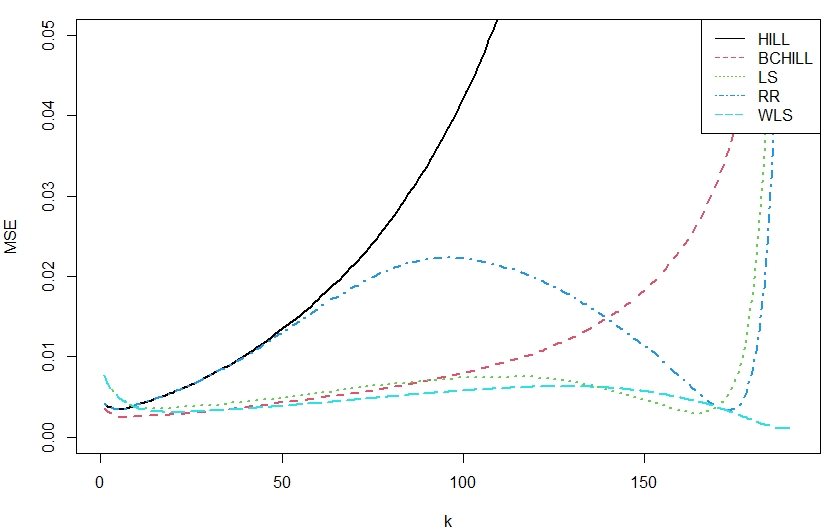}}\hfill
	\\[\smallskipamount]
	\caption{Results for Burr distribution with $\gamma=0.1$: average EVI(leftmost panel); Bias(middle panel); MSE(rightmost panel). First row: $n=50$; second row: $n=200$}
	\label{Fig:Figure 1}
\end{figure}
\pagebreak
\begin{figure}[!htbp]
	\centering
	\subfloat{\includegraphics[width=5.5cm,height=4cm]{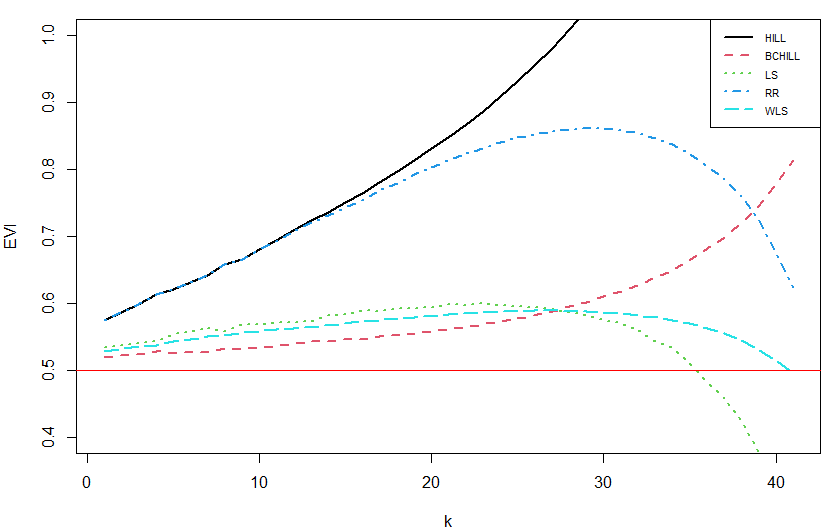}}\hfill
	\subfloat{\includegraphics[width=5.5cm,height=4cm]{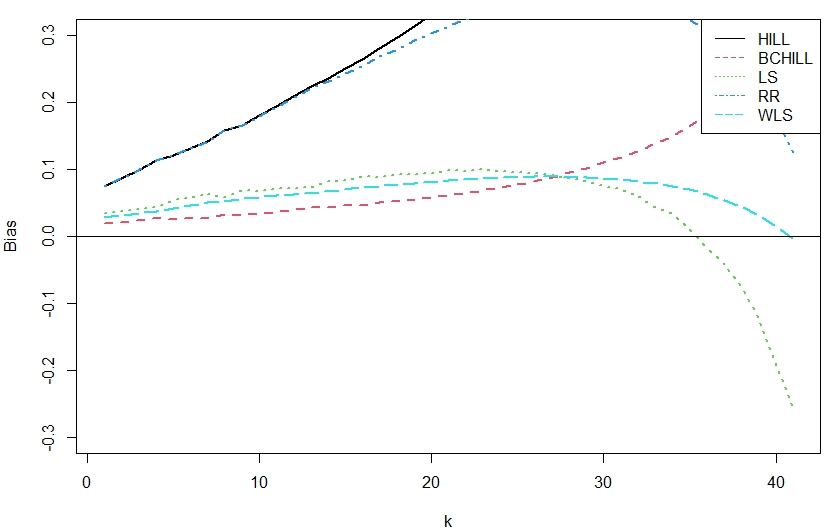}}\hfill
	\subfloat{\includegraphics[width=5.5cm,height=4cm]{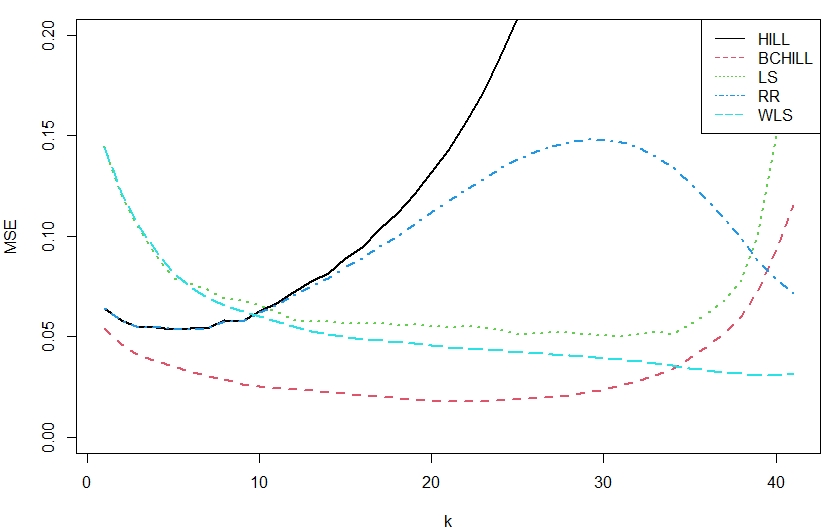}}\hfill
	\\[\smallskipamount]
	\subfloat{\includegraphics[width=5.5cm,height=4cm]{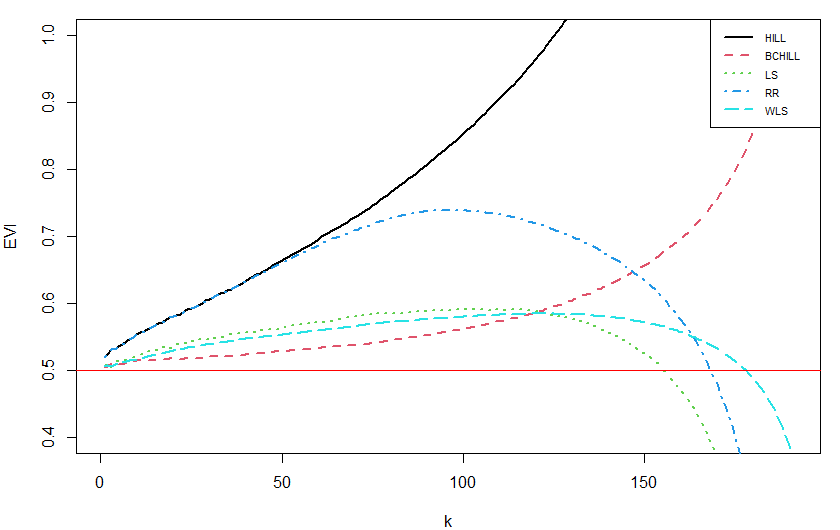}}\hfill
	\subfloat{\includegraphics[width=5.5cm,height=4cm]{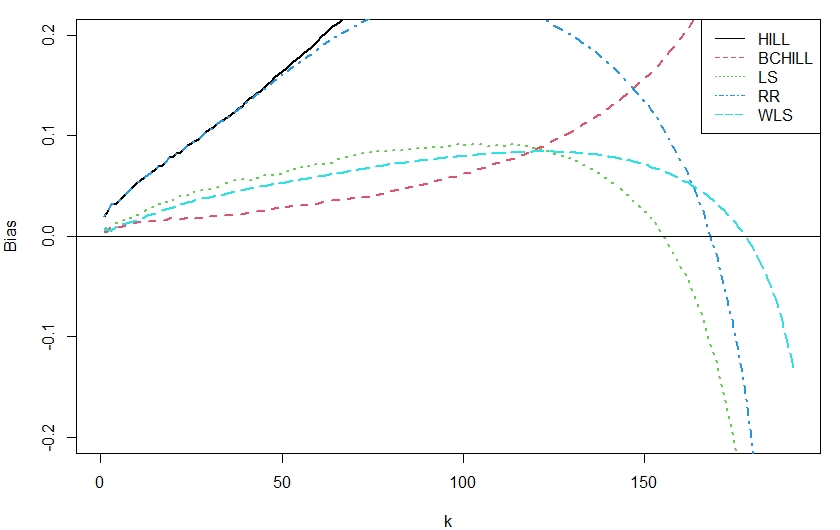}}\hfill
	\subfloat{\includegraphics[width=5.5cm,height=4cm]{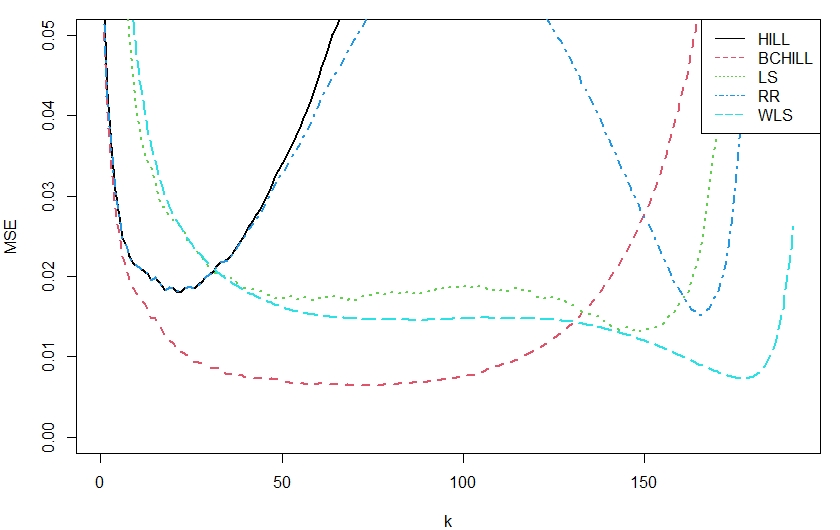}}\hfill
	\caption{Results for Burr distribution with $\gamma=0.5$: average EVI(leftmost panel); Bias(middle panel); MSE(rightmost panel). First row: $n=50$; second row: $n=200.$}
	\label{Fig:Figure 2}
\end{figure}
\begin{figure}[!htbp]
	\centering
	\subfloat{\includegraphics[width=5.5cm,height=4cm]{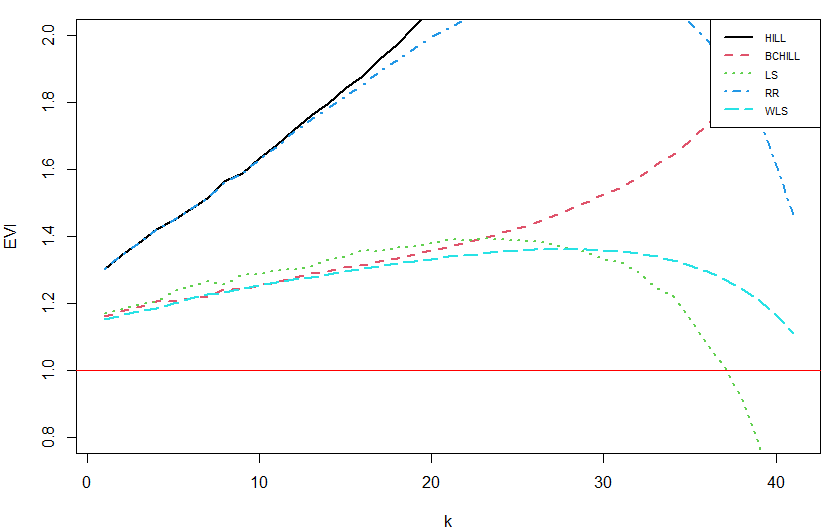}}\hfill
	\subfloat{\includegraphics[width=5.5cm,height=4cm]{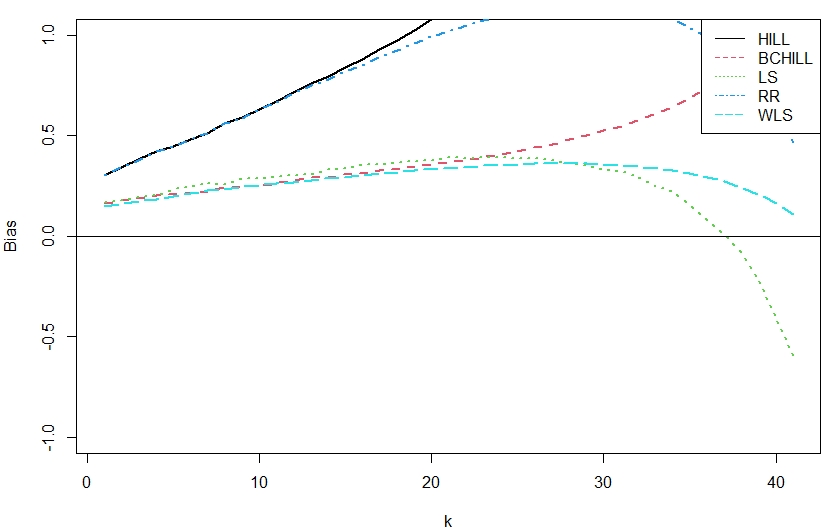}}\hfill
	\subfloat{\includegraphics[width=5.5cm,height=4cm]{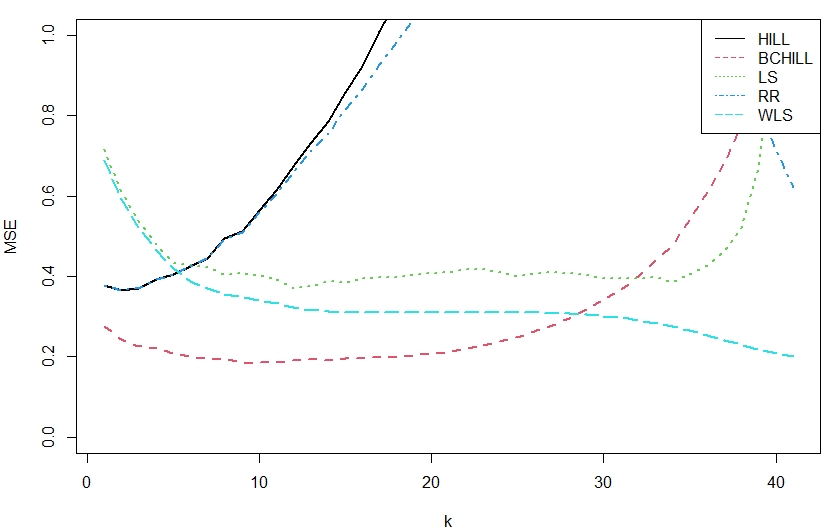}}\hfill
	\\[\smallskipamount]
	\subfloat{\includegraphics[width=5.5cm,height=4cm]{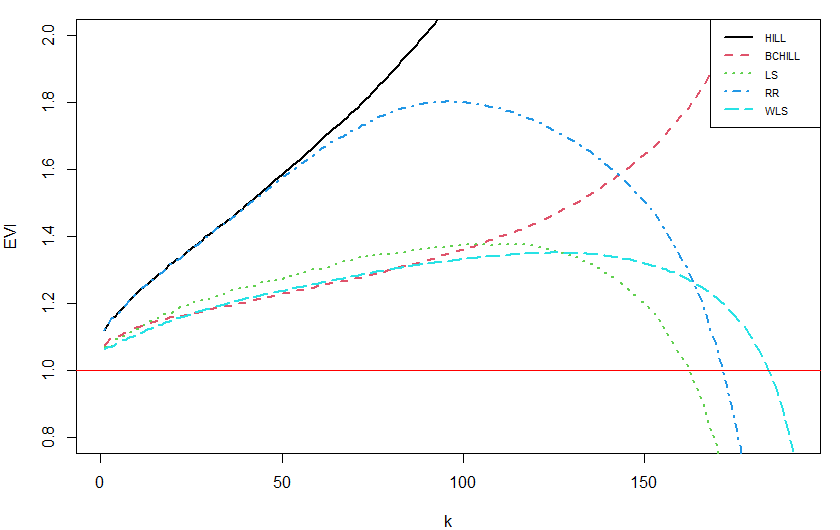}}\hfill
	\subfloat{\includegraphics[width=5.5cm,height=4cm]{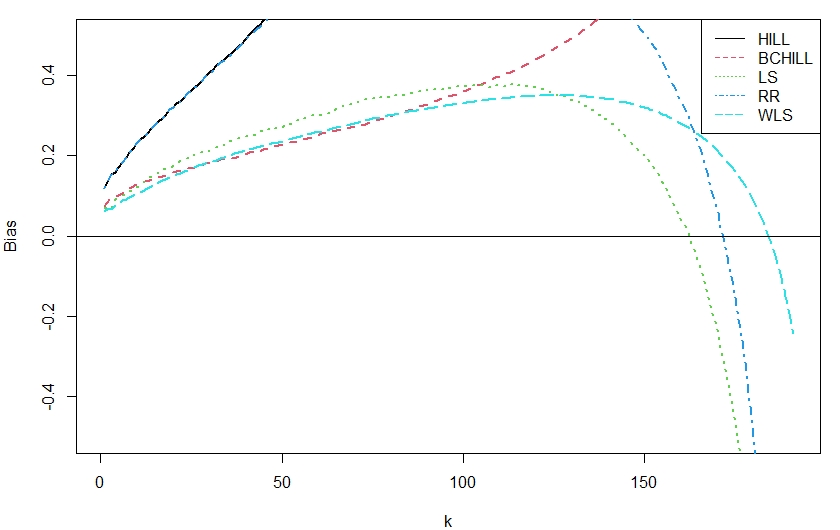}}\hfill
	\subfloat{\includegraphics[width=5.5cm,height=4cm]{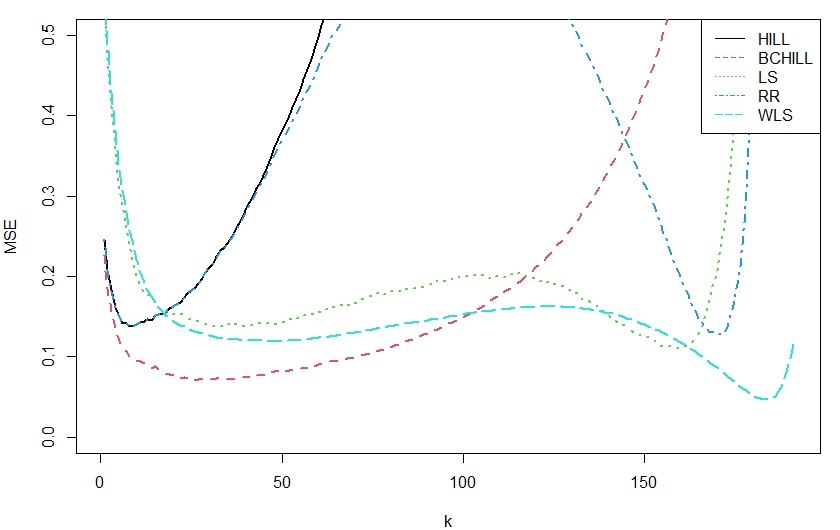}}\hfill
	\caption{Results for Burr distribution with $\gamma=1.0:$ average EVI(leftmost panel); Bias(middle panel); MSE(rightmost panel). First row: $n=50;$ second row: $n=200.$}
	\label{Fig:Figure 3}
\end{figure}

\newpage
\begin{figure}[!htbp]
	\centering
	\subfloat{\includegraphics[width=5.5cm,height=4cm]{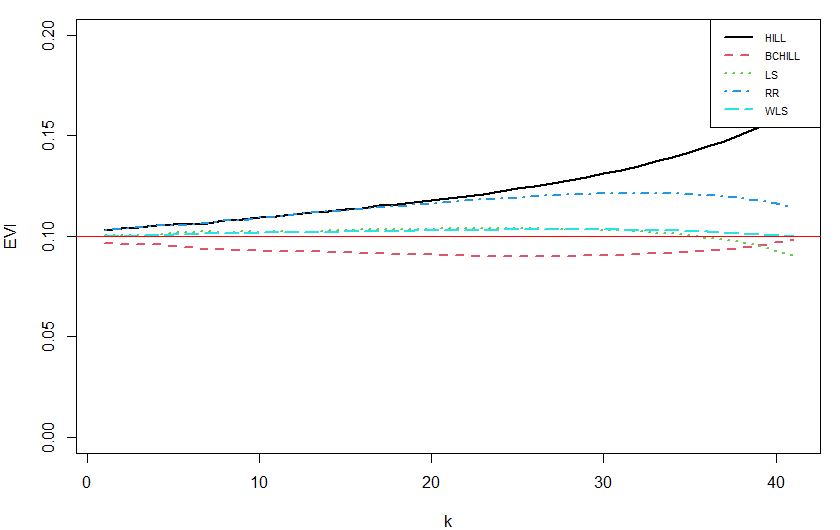}}\hfill
	\subfloat{\includegraphics[width=5.5cm,height=4cm]{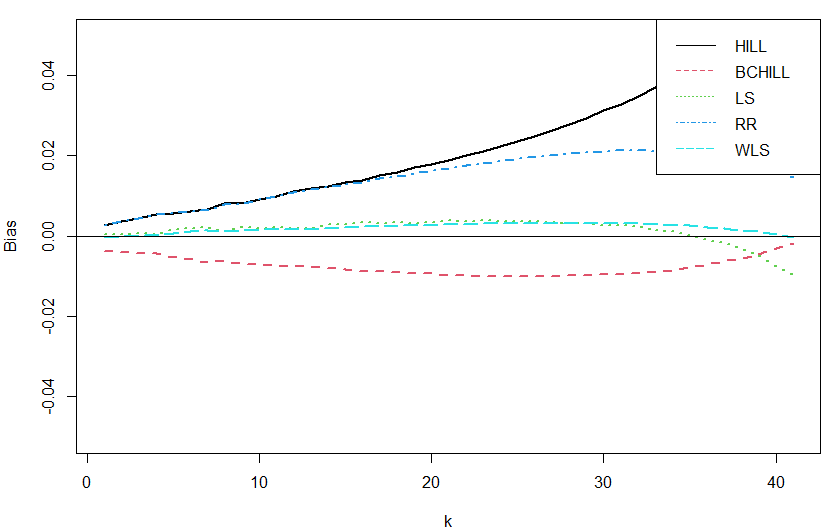}}\hfill
	\subfloat{\includegraphics[width=5.5cm,height=4cm]{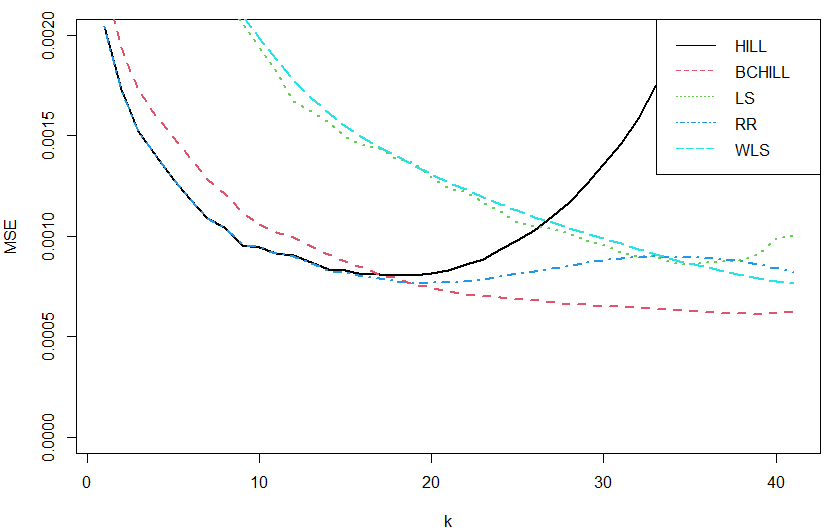}}\hfill
	\\[\smallskipamount]
	\subfloat{\includegraphics[width=5.5cm,height=4cm]{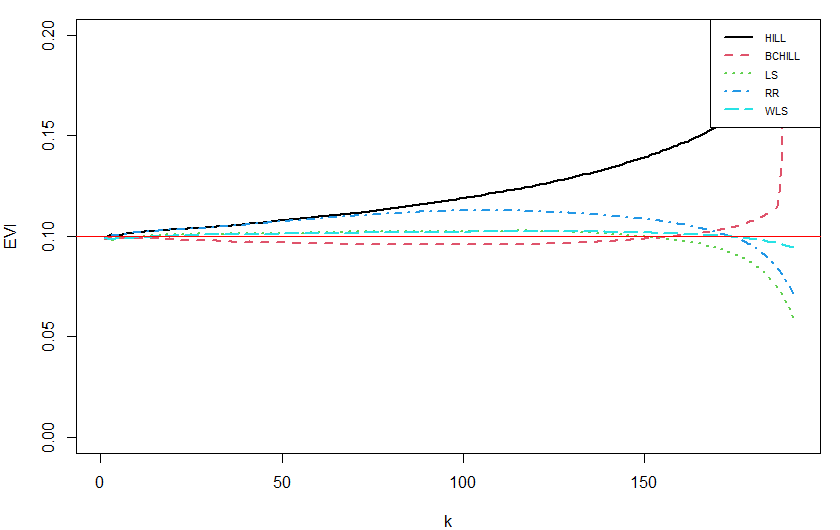}}\hfill
	\subfloat{\includegraphics[width=5.5cm,height=4cm]{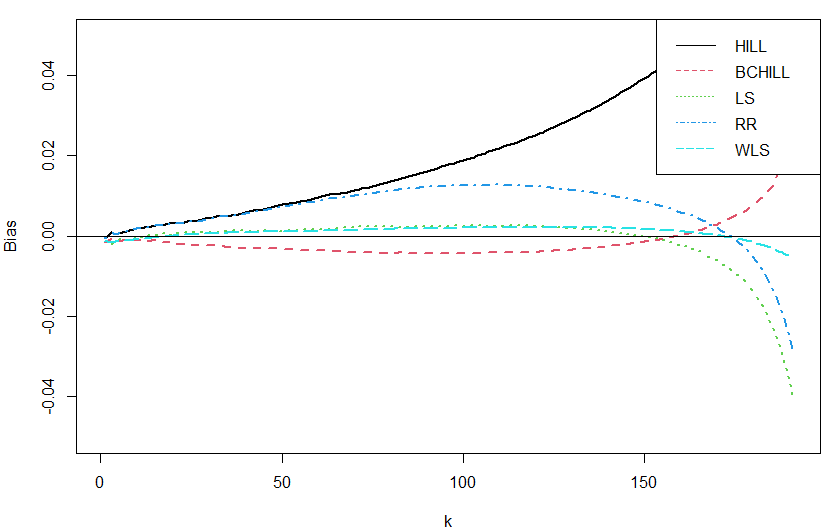}}\hfill
	\subfloat{\includegraphics[width=5.5cm,height=4cm]{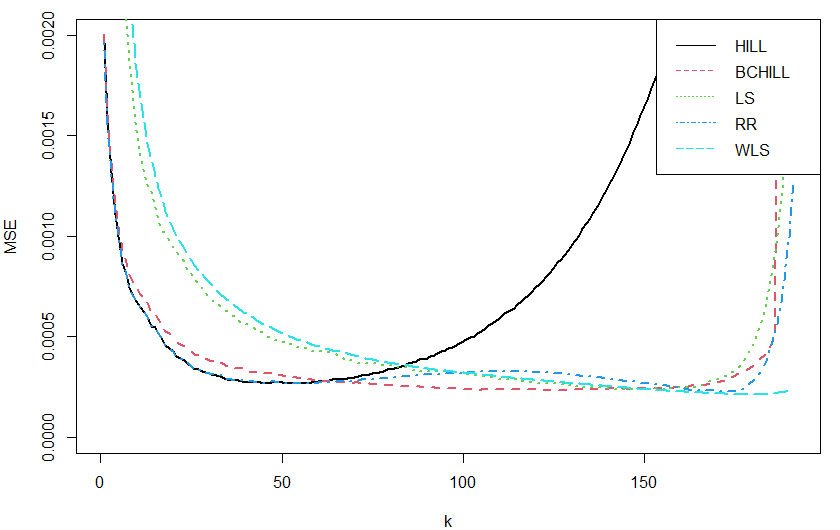}}\hfill
	\\[\smallskipamount]
	\caption{Results for Fr\'{e}chet distribution with $\gamma=0.1$: average EVI(leftmost panel); Bias(middle panel); MSE(rightmost panel). First row: $n=50$; second row: $n=200$}
	\label{Fig:Figure 4}
\end{figure}
\begin{figure}[!htbp]
	\centering
	\subfloat{\includegraphics[width=5.5cm,height=4cm]{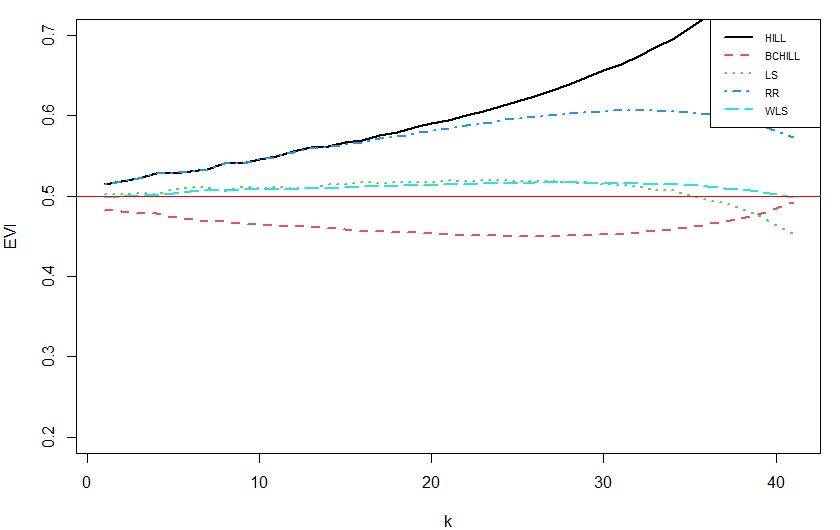}}\hfill
	\subfloat{\includegraphics[width=5.5cm,height=4cm]{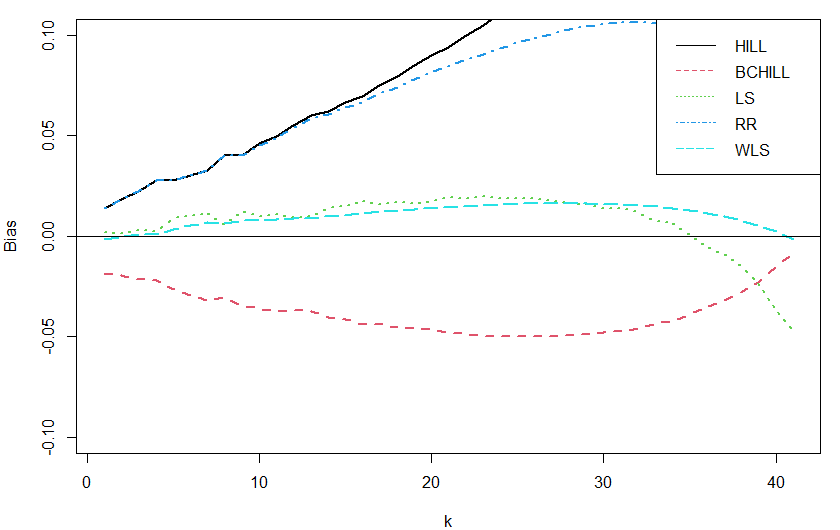}}\hfill
	\subfloat{\includegraphics[width=5.5cm,height=4cm]{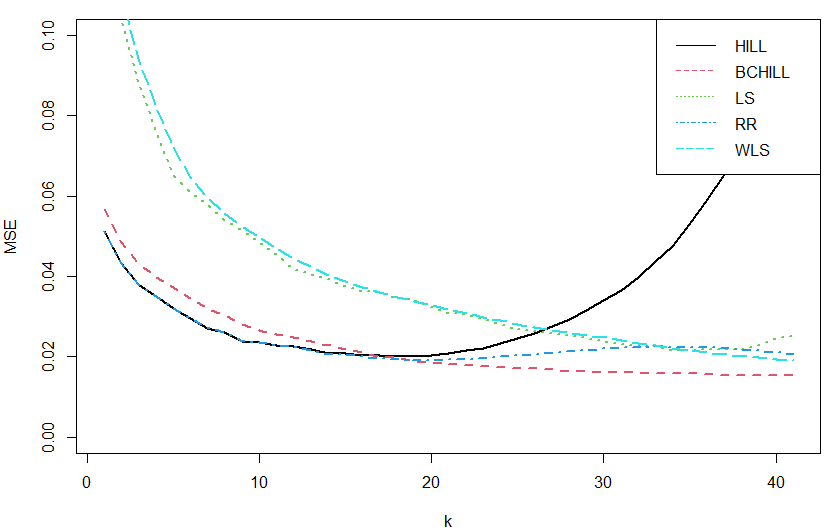}}\hfill
	\\[\smallskipamount]
	\subfloat{\includegraphics[width=5.5cm,height=4cm]{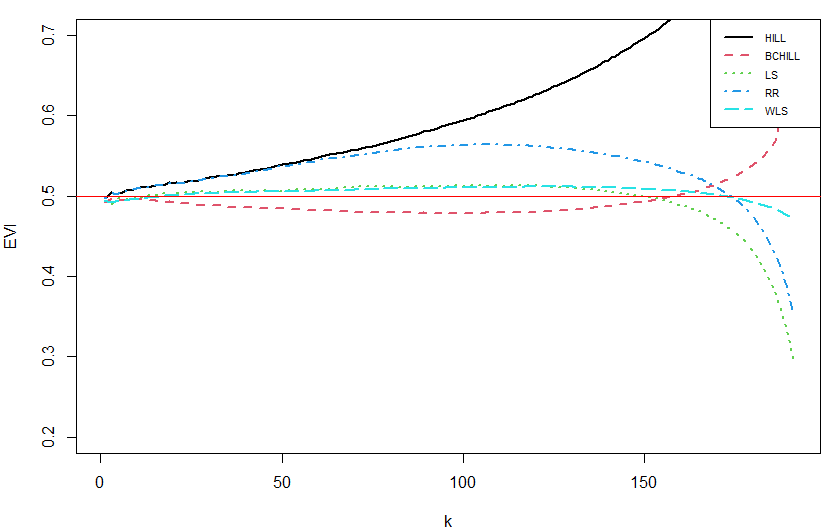}}\hfill
	\subfloat{\includegraphics[width=5.5cm,height=4cm]{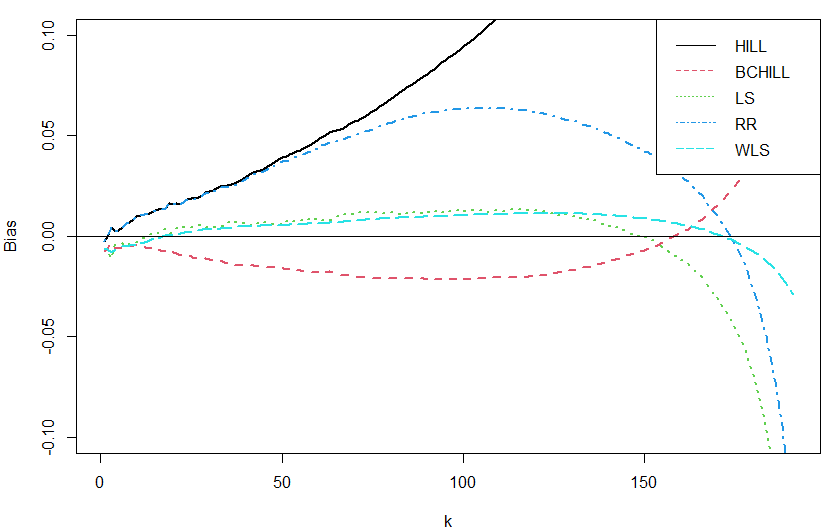}}\hfill
	\subfloat{\includegraphics[width=5.5cm,height=4cm]{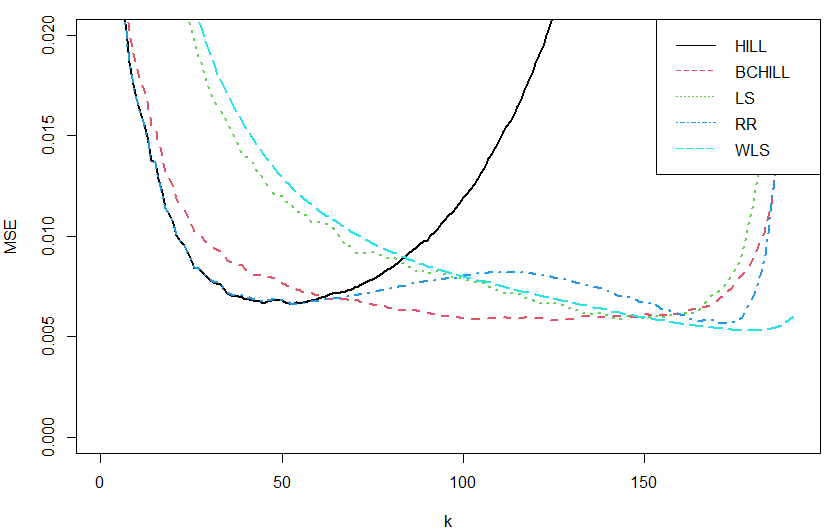}}\hfill
	\caption{Results for Fr\'{e}chet distribution with $\gamma=0.5$: average EVI(leftmost panel); Bias(middle panel); MSE(rightmost panel). First row: $n=50$; second row: $n=200.$}
	\label{Fig:Figure 5}
\end{figure}
\pagebreak
\begin{figure}[!htbp]
	\centering
	\subfloat{\includegraphics[width=5.5cm,height=4cm]{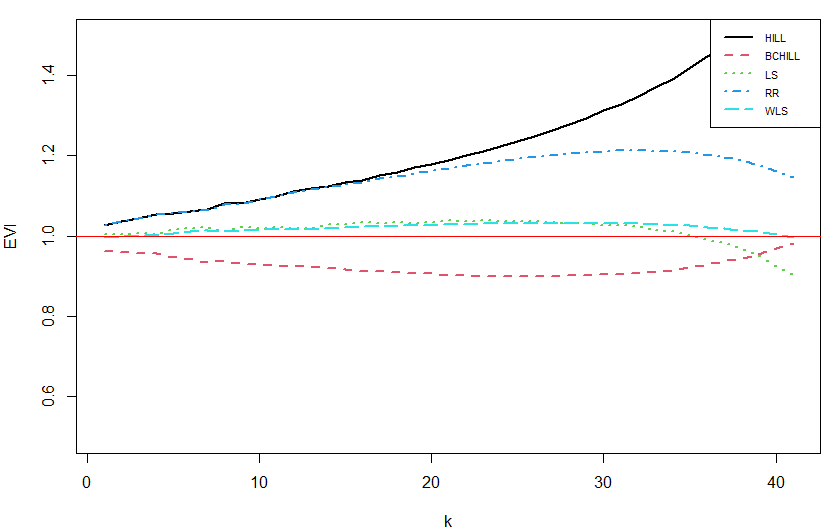}}\hfill
	\subfloat{\includegraphics[width=5.5cm,height=4cm]{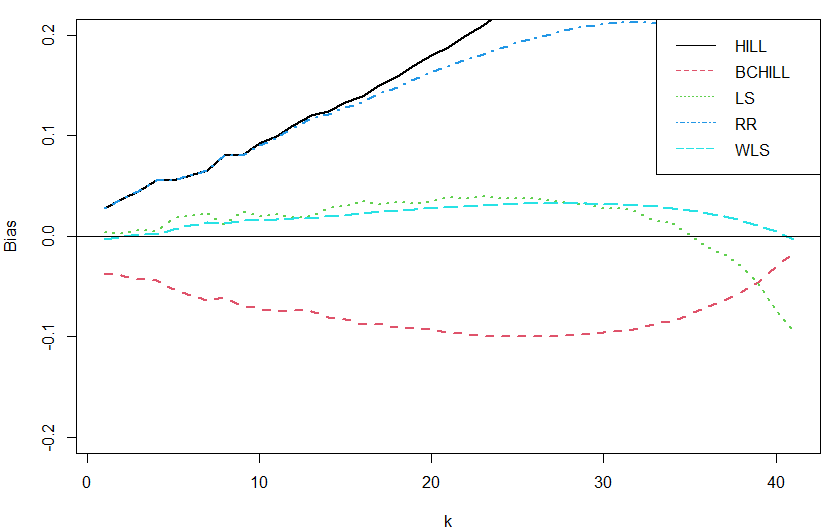}}\hfill
	\subfloat{\includegraphics[width=5.5cm,height=4cm]{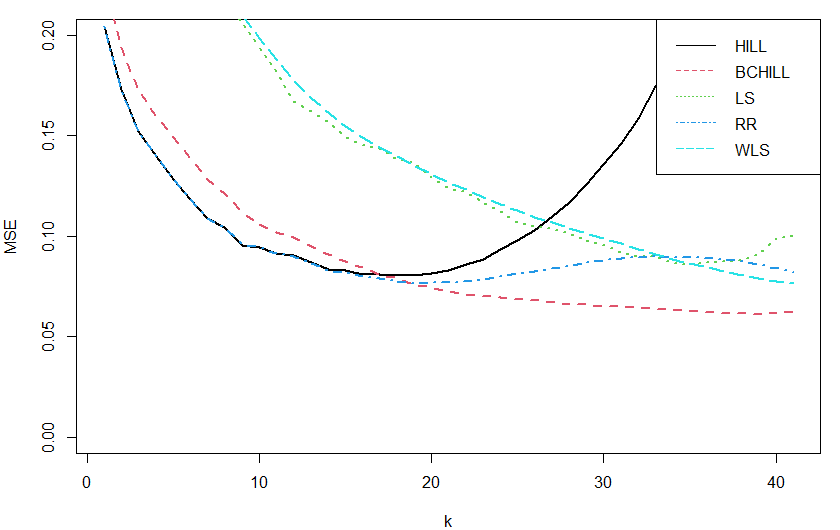}}\hfill
	\\[\smallskipamount]
	\subfloat{\includegraphics[width=5.5cm,height=4cm]{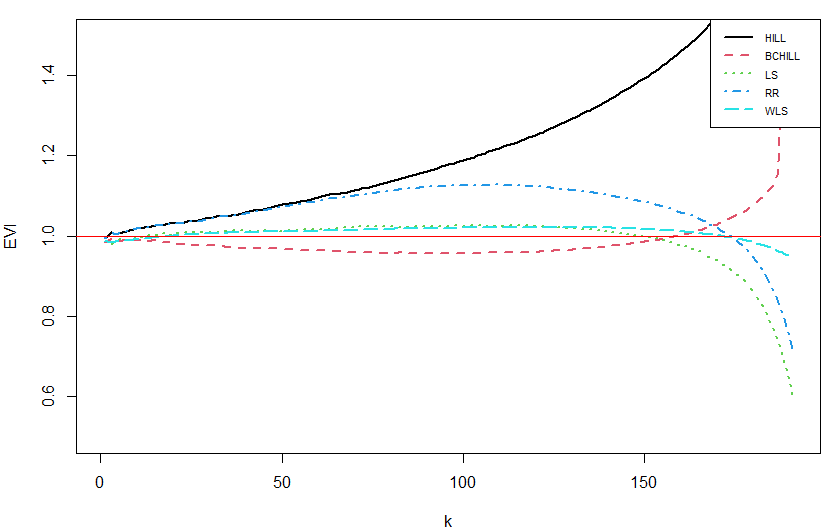}}\hfill
	\subfloat{\includegraphics[width=5.5cm,height=4cm]{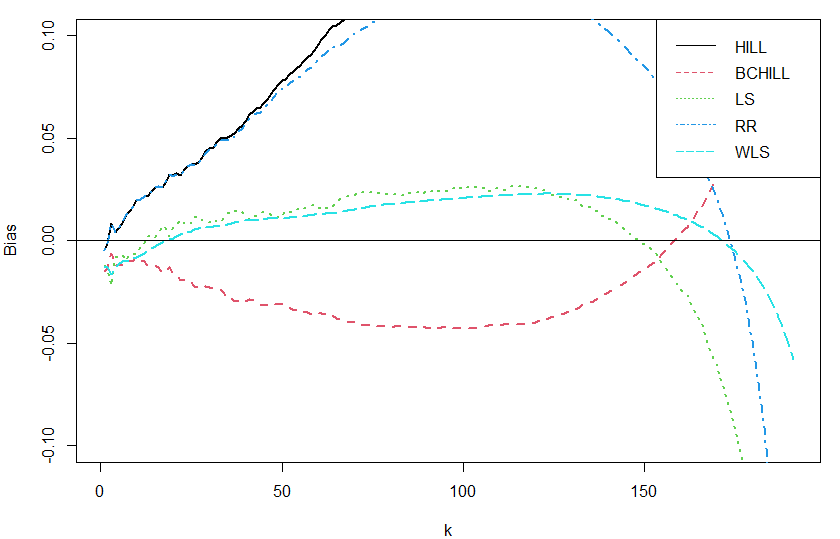}}\hfill
	\subfloat{\includegraphics[width=5.5cm,height=4cm]{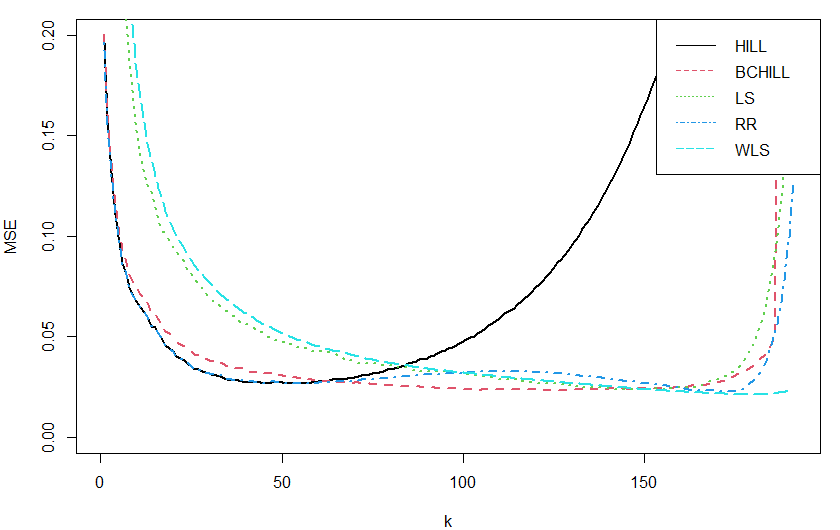}}\hfill
	\caption{Results for Fr\'{e}chet distribution with $\gamma=1.0:$ average EVI(leftmost); Bias(middle panel); MSE(rightmost panel). First row: $n=50;$ second row: $n=200.$}
	\label{Fig:Figure 6}
\end{figure}

\subsection{Practical Illustration}
To illustrate the practical application of the proposed estimator, we consider the estimation of the extreme value index of the underlying distribution of the Secura Belgian reinsurance claim size data and the calcium content (measured in mg/100g of dry soil) of soil from a particular city (NIS code 61072) in the Condroz region of Belgium. These data sets have been studied extensively in the extreme value theory literature such as \citet{Beirlant2004}. In this application, we lookout for estimators whose graph is near horizontal and smooth.

The Secura Belgian Reinsurance data contains 371 automobile claims in euros which occurred from 1989 to 2001.  \citet{Beirlant2004} have demonstrated that the Secura Belgian reinsurance data has a heavy tail. The plots of the extreme value index estimates as a function $k$ is shown in Figure~\ref{Fig:Figure 6}.

Also, the Condroz data contains 1505 observations, and seven of these values deviate from the rest of the observations.  \citet{Beirlant2004} removed the top seven Calcium measurements before modelling the data. However, we fitted the biased-reduced weighted least square to the complete data to study the actual tail behaviour of the data. \citet{Beirlant2004} have shown that the Ca measurements in the Condroz data belongs to the Pareto domain of attraction. The extreme value index estimates of the Ca measurements of the Condroz data is illustrated in Figure~\ref{Fig:Figure 7}.

In the case of the Secura Belgian data, the HILL estimator is very sensitive to the choice of $k$, and this makes its interpretation challenging. The BCHILL and LS estimators exhibit some moderate stability in $k$. The ridge regression estimates are stable for $k$ values between $110$ and $220$, while the reduced-bias weighted least square yields stable estimates for $k\ge 120$. The WLS shows stability across a larger region of $k$, making it easier to specify the value of the estimated extreme value index. 

Regarding the Condroz data, the plots of  HILL and BCHILL estimators are difficult to interpret for large values of $k$, and this is because the estimators are very sensitive to the choice of $k$. However, they are less sensitive for $k$ values between $270$ and $500$. The extreme value index estimates for the RR, LS and WLS estimators are stable and approximately equal to $0.26$ for $k$ values in the interval $[283,1097]$, $[505, 1107]$ and $[710,1230]$ respectively. 

The $\rho$ estimator by \citet{alves2003estimation} was also considered for the application but the result is omitted. The reduced-bias estimators produced stable estimates over some $k$ values but not as horizontal as when the minimum variance \citep{Buitendag2018} approach is used, especially in the case of the Secura Belgian claim data whose sample size is relatively small.

\begin{figure}[!htbp]
	\centering
	\subfloat{\includegraphics[width=8cm,height=5cm]{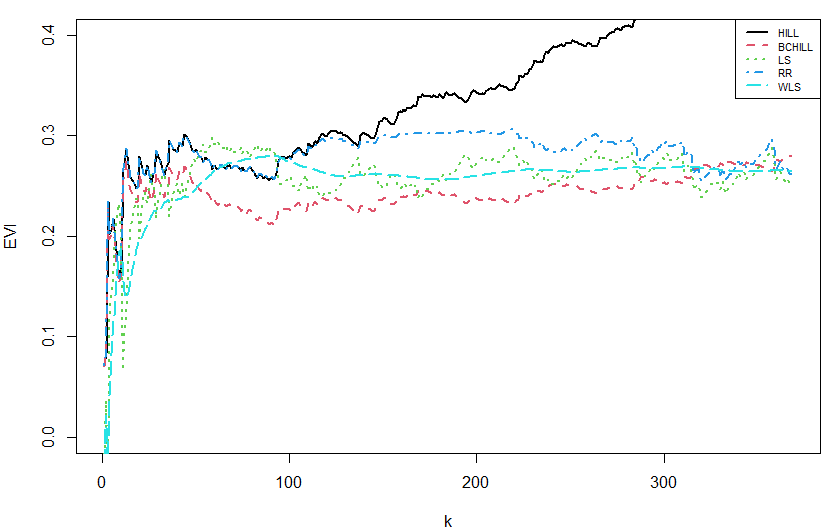}}\hfill
	\subfloat{\includegraphics[width=8cm,height=5cm]{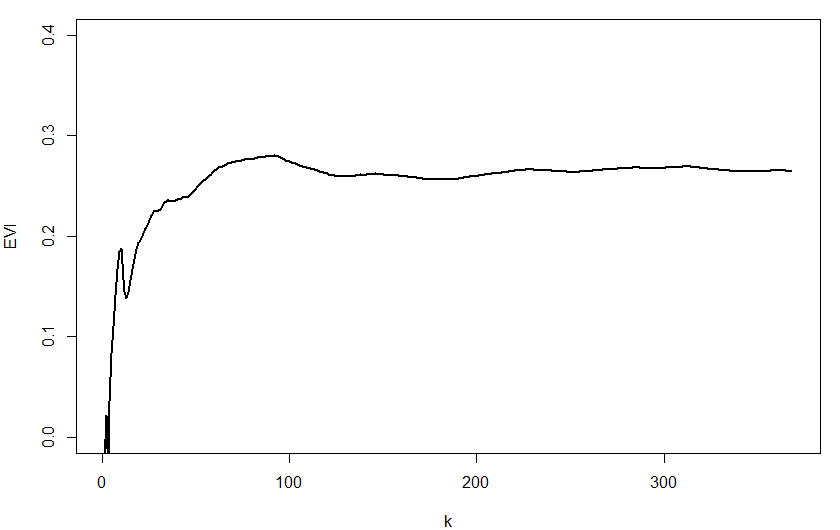}}\hfill
	\caption{Claim sizes of Secura Belgian Reinsurance Data: $\hat{\gamma}_{wls}^{+}$ on the right}
	\label{Fig:Figure 7}
\end{figure}
\begin{figure}[!htbp]
	\centering
	\subfloat{\includegraphics[width=8cm,height=5cm]{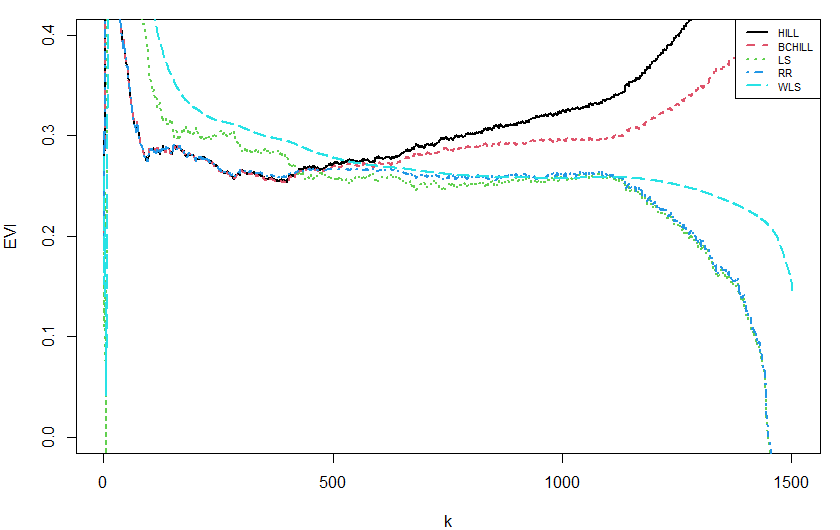}}\hfill
	\subfloat{\includegraphics[width=8cm,height=5cm]{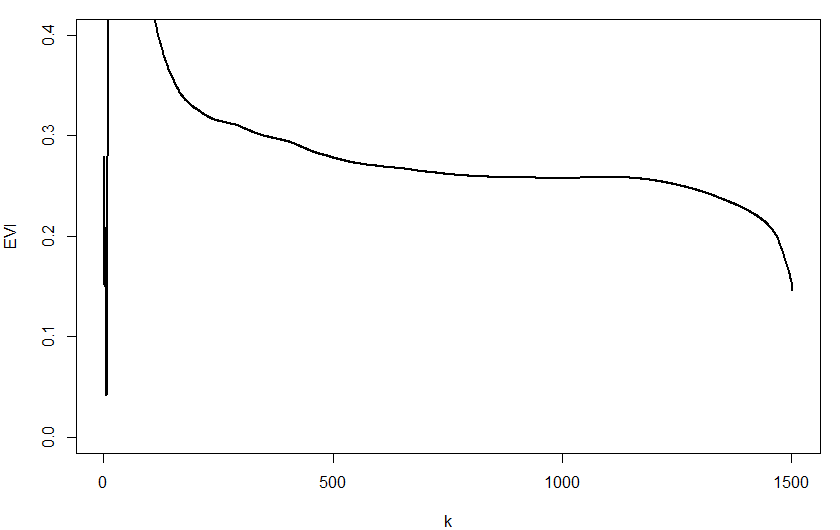}}\hfill
	\caption{Calcium content of soil. Condroz Data: $\hat{\gamma}_{wls}^{+}$ on the right}
	\label{Fig:Figure 8}
\end{figure}

\section{\textbf{Conclusion}}\label{S4}
This study proposed an alternative reduced-bias estimator for the extreme value index, $\gamma (\gamma>0),$ using the exponential regression model introduced by \citet{Beirlant2002}. Specifically, we proposed a reduced-bias weighted least squares estimator of the extreme value index using the exponential regression model for the weighted log-spacings. We showed that the proposed estimator is asymptotically unbiased, asymptotically consistent and asymptotically normal. In addition, it performed favourably well in terms of bias and MSE relative to the other EVI estimators considered in the study. Furthermore, it produces stable estimates that are less sensitive to the tail sample fraction, $k$. Hence it can easily be used in the selection of the appropriate value of $k$ which is an ongoing research area in extreme value analysis. In application, we suggest practitioners plots $\hat{\gamma}_{wls}^{+}$ as a function of $k$ to aid in the selection of an optimal $k$ for the proposed estimator.


Unlike the least squares and the ridge regression estimators in the literature, the asymptotic variance of the proposed estimator is $\sigma^2=4/3\gamma^2$ and does not depend on $\rho.$ Also, the asymptotic variance of the proposed estimator is less than that of the least squares estimator for $-6\le \rho<0.$ Moreover, we envisage that if we choose a random weight, we can attain an asymptotic variance that is as smaller as that of the Hill or the bias-corrected Hill estimators. This is a topic that deserves further studies and will be considered in a subsequent work.
\subsection*{Data Availability}
The data sets used in this study to support the findings are from \citet{Beirlant2004} and are available at \url{https://lstat.kuleuven.be/Wiley/}.

\subsection*{Conflicts of Interest}

The authors declare that there is no conflict of interest regarding the publication of this paper.\\

\subsection*{Acknowledgement}

Note, a draft of  the  article  has  previously  appeared online via arxiv.org as  a  preprint. See, \cite{ocran2021reduced}.  Also, Ocran,  E. would like to thank the University of Ghana Building a New Generation of Academics in Africa (BANGA-Africa) Project (funded by Carnegie Corporation of New York )  for providing financial support  for  this Ph.D  research work.\\
\newpage
\section*{Appendix: Log-Gamma Distribution}

\begin{figure}[!htbp]
	\centering
	\subfloat{\includegraphics[width=5.5cm,height=4cm]{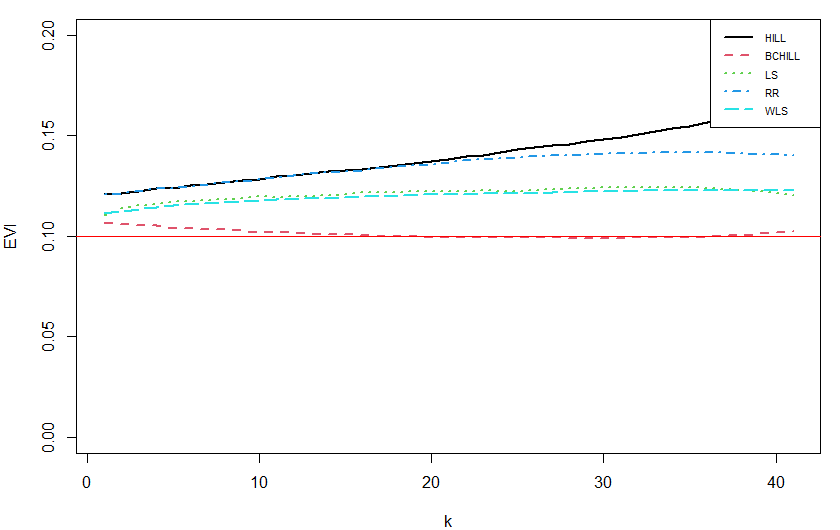}}\hfill
	\subfloat{\includegraphics[width=5.5cm,height=4cm]{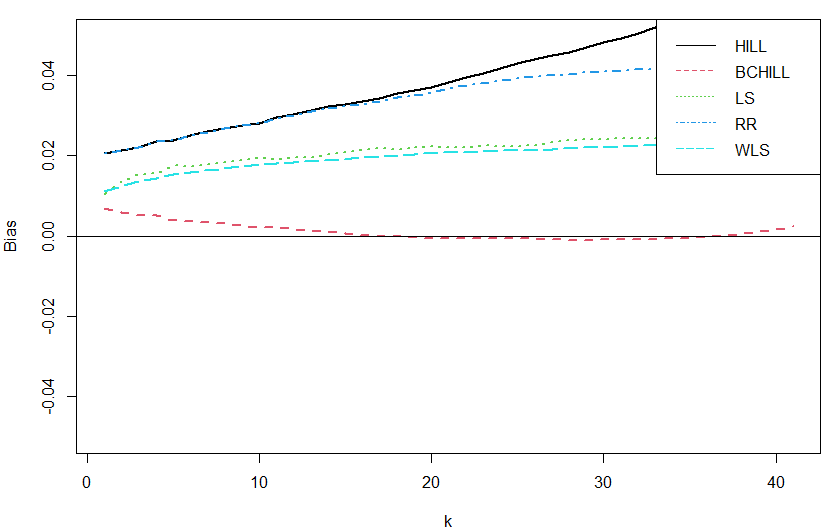}}\hfill
	\subfloat{\includegraphics[width=5.5cm,height=4cm]{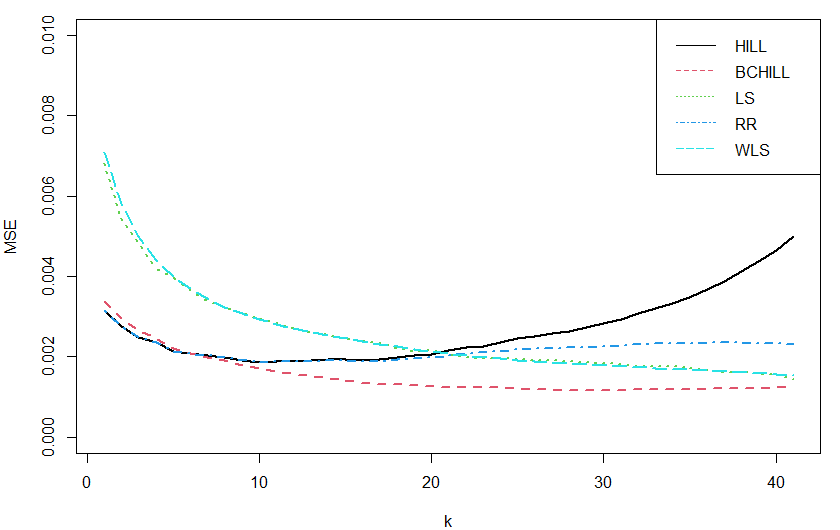}}\hfill
	\\[\smallskipamount]
	\subfloat{\includegraphics[width=5.5cm,height=4cm]{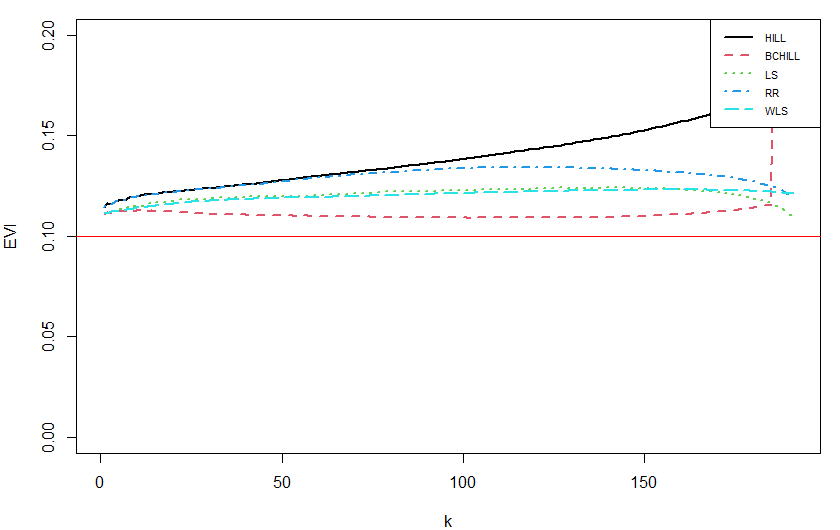}}\hfill
	\subfloat{\includegraphics[width=5.5cm,height=4cm]{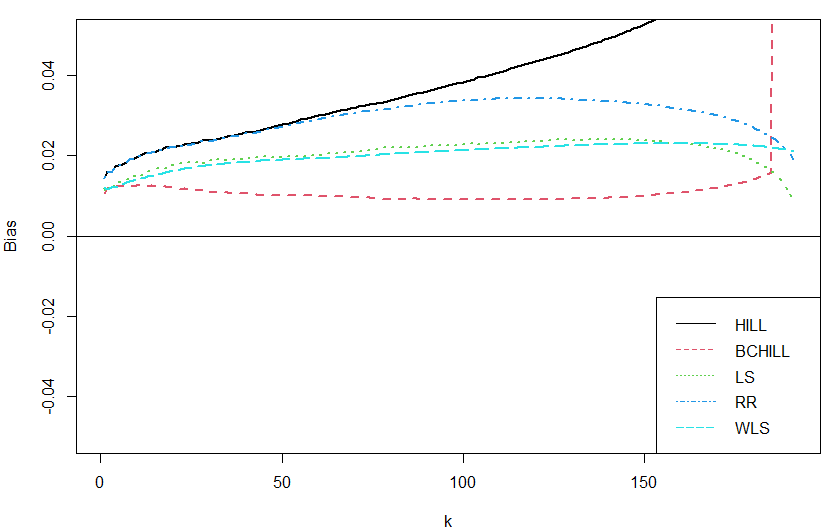}}\hfill
	\subfloat{\includegraphics[width=5.5cm,height=4cm]{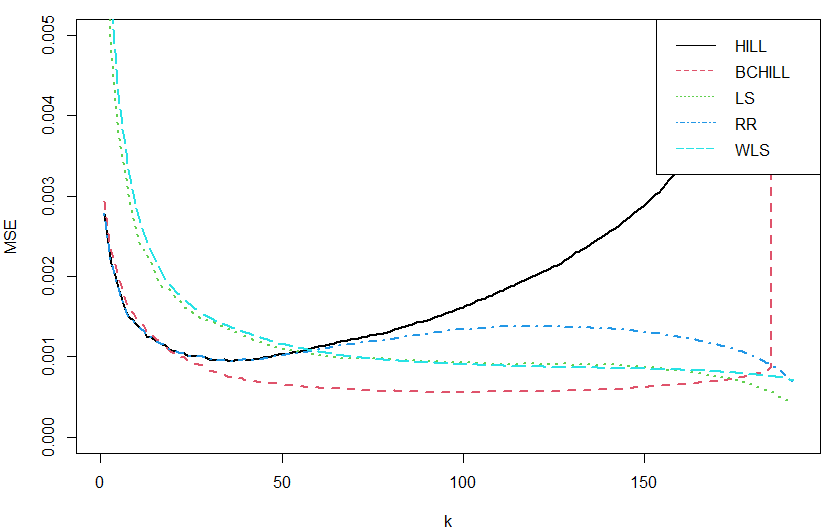}}\hfill
	\\[\smallskipamount]
	\caption{Results for Log-Gamma  distribution with $\gamma=0.1$: average EVI(leftmost panel); Bias(middle panel); MSE(rightmost panel). First row: $n=50$; second row: $n=200$}
	\label{Fig:Figure A1}
\end{figure}
\begin{figure}[!htbp]
	\centering
	\subfloat{\includegraphics[width=5.5cm,height=4cm]{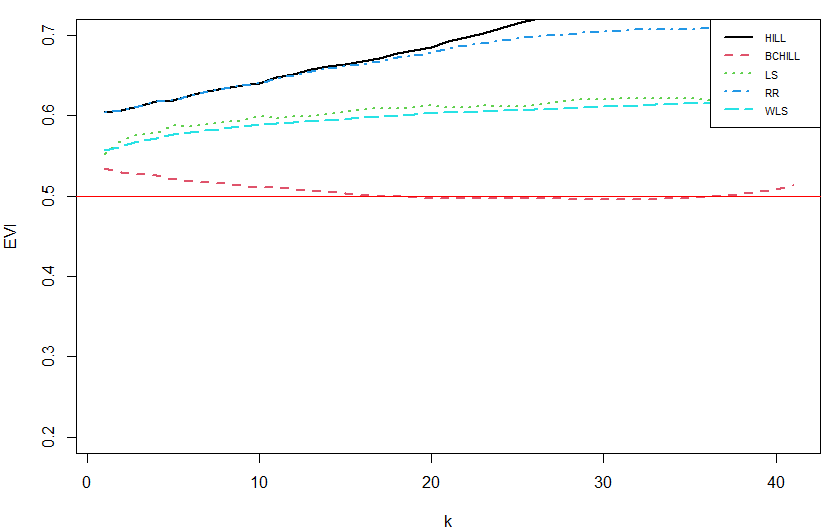}}\hfill
	\subfloat{\includegraphics[width=5.5cm,height=4cm]{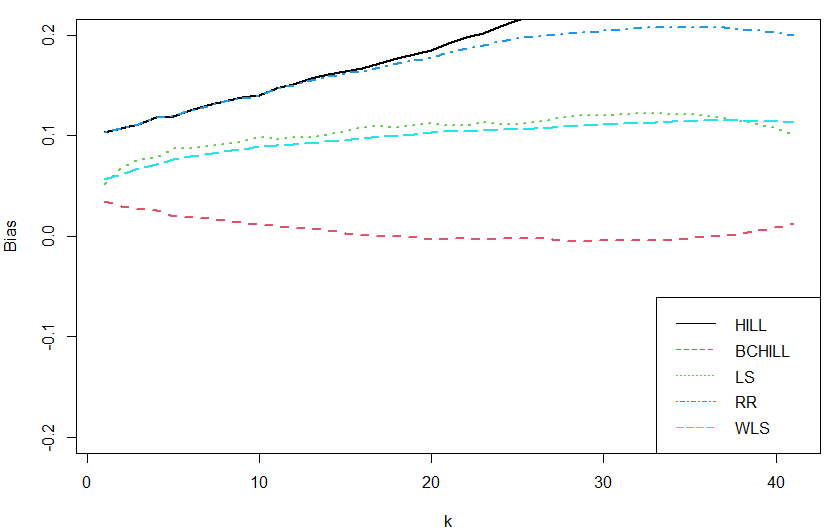}}\hfill
	\subfloat{\includegraphics[width=5.5cm,height=4cm]{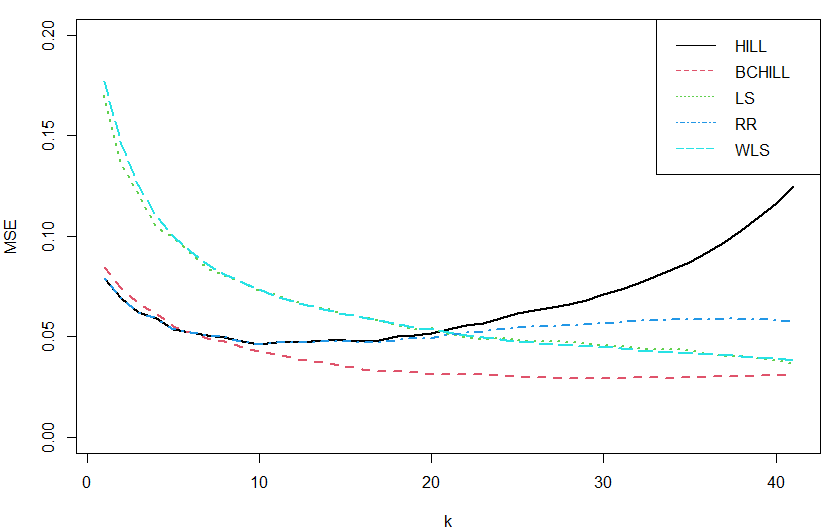}}\hfill
	\\[\smallskipamount]
	\subfloat{\includegraphics[width=5.5cm,height=4cm]{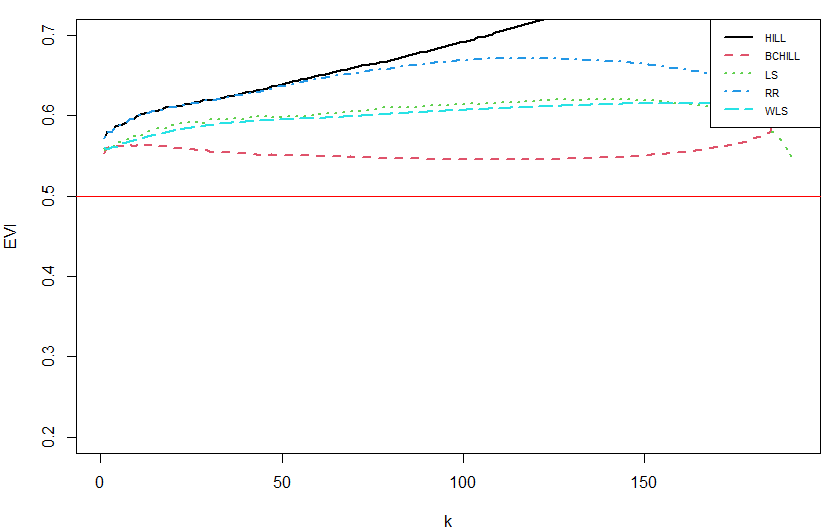}}\hfill
	\subfloat{\includegraphics[width=5.5cm,height=4cm]{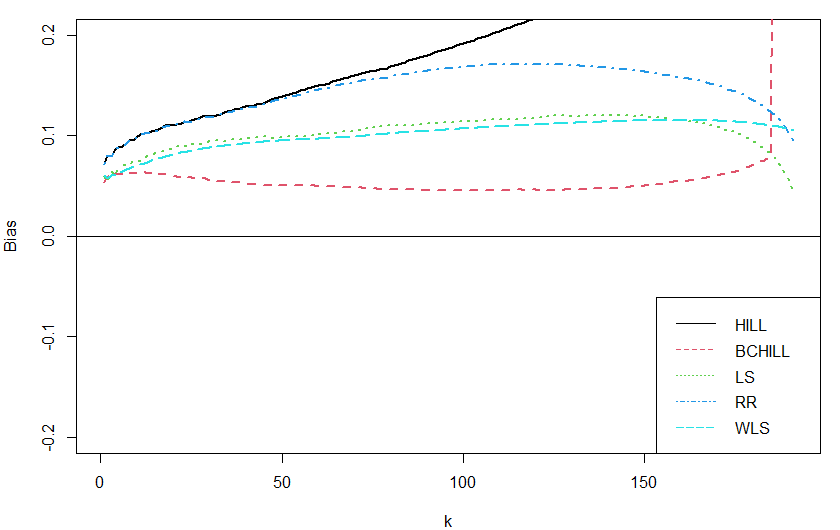}}\hfill
	\subfloat{\includegraphics[width=5.5cm,height=4cm]{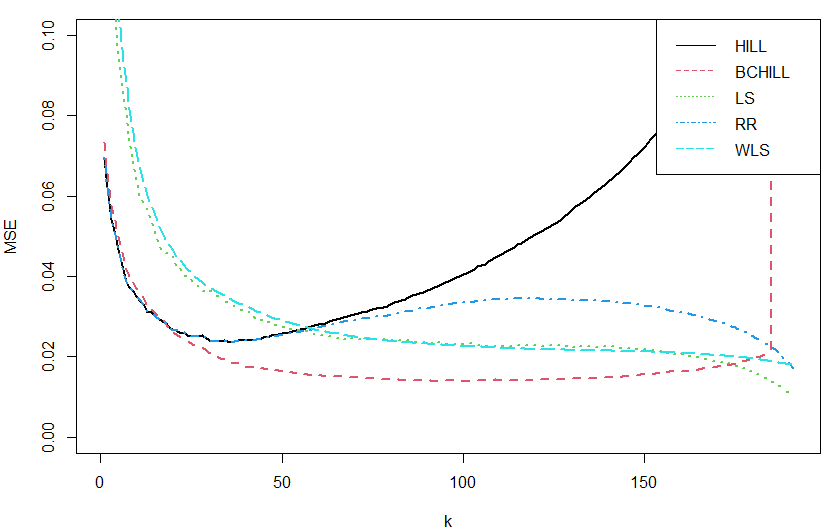}}\hfill
	\caption{Results for Log-Gamma distribution with $\gamma=0.5$: average EVI(leftmost panel); Bias(middle panel); MSE(rightmost panel). First row: $n=50$; second row: $n=200.$}
	\label{Fig:Figure A2}
\end{figure}
\begin{figure}[!htbp]
	\centering
	\subfloat{\includegraphics[width=5.5cm,height=4cm]{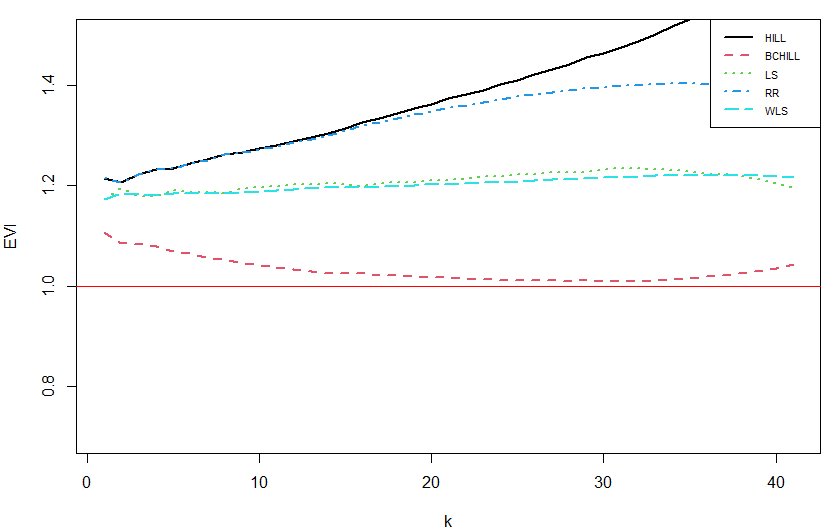}}\hfill
	\subfloat{\includegraphics[width=5.5cm,height=4cm]{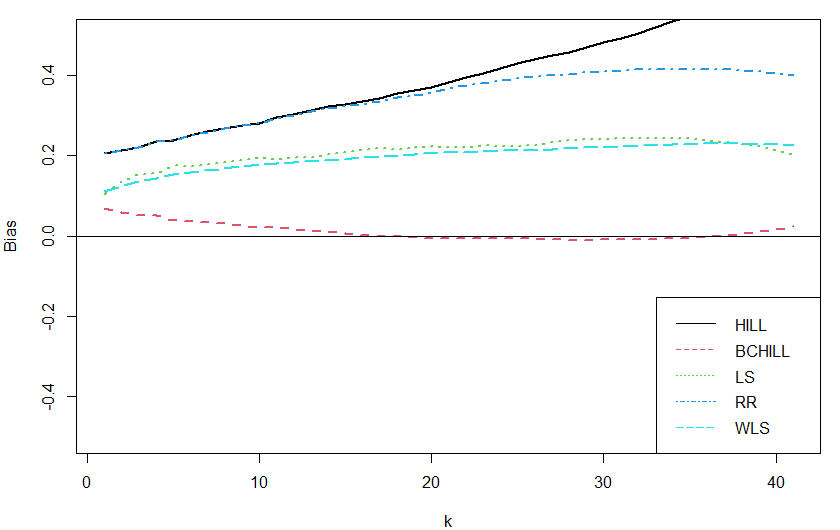}}\hfill
	\subfloat{\includegraphics[width=5.5cm,height=4cm]{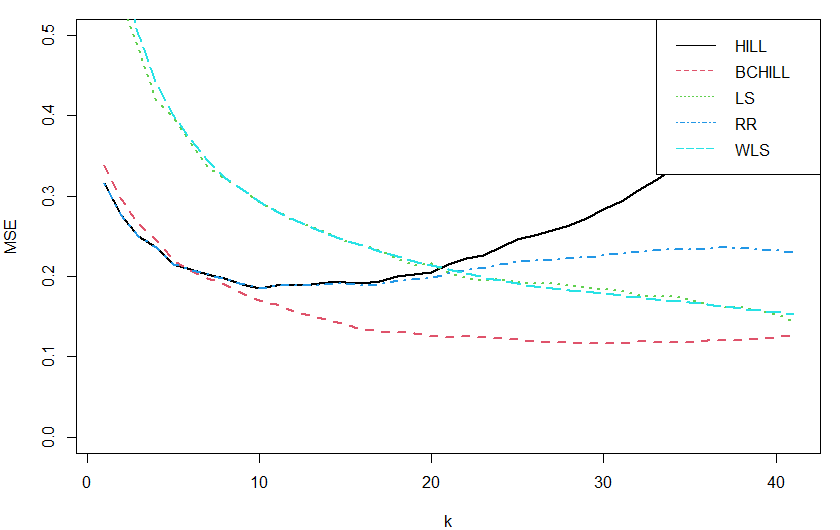}}\hfill
	\\[\smallskipamount]
	\subfloat{\includegraphics[width=5.5cm,height=4cm]{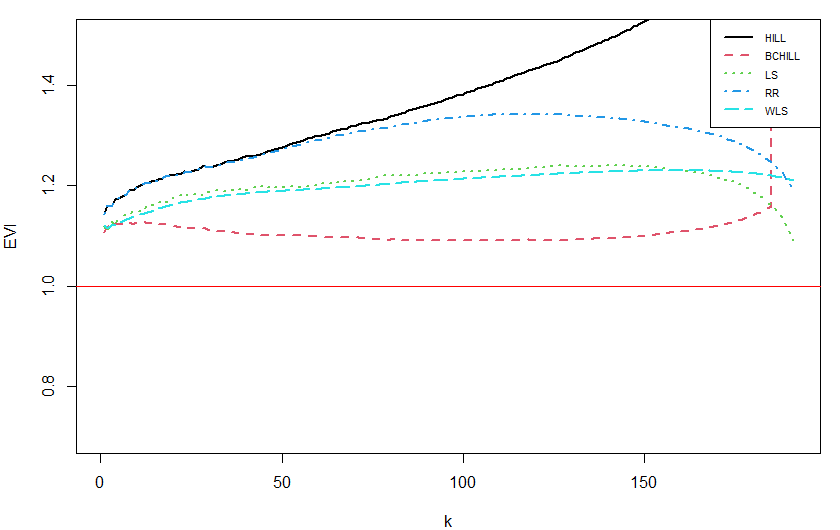}}\hfill
	\subfloat{\includegraphics[width=5.5cm,height=4cm]{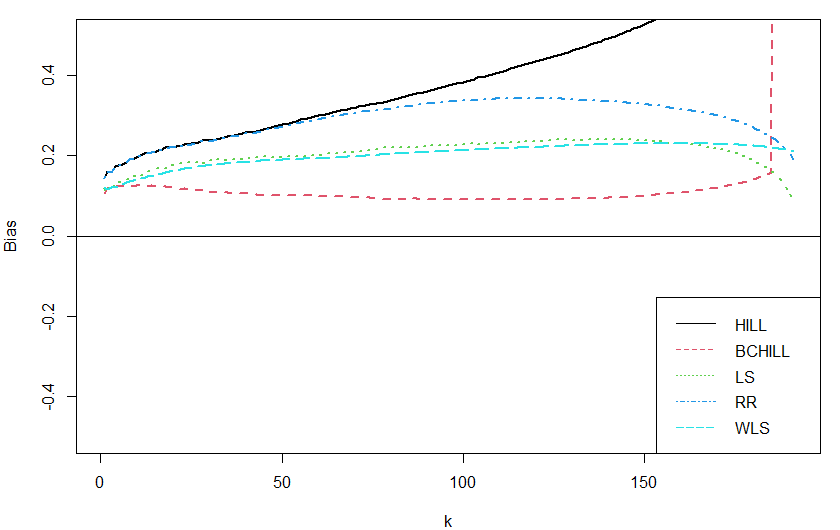}}\hfill
	\subfloat{\includegraphics[width=5.5cm,height=4cm]{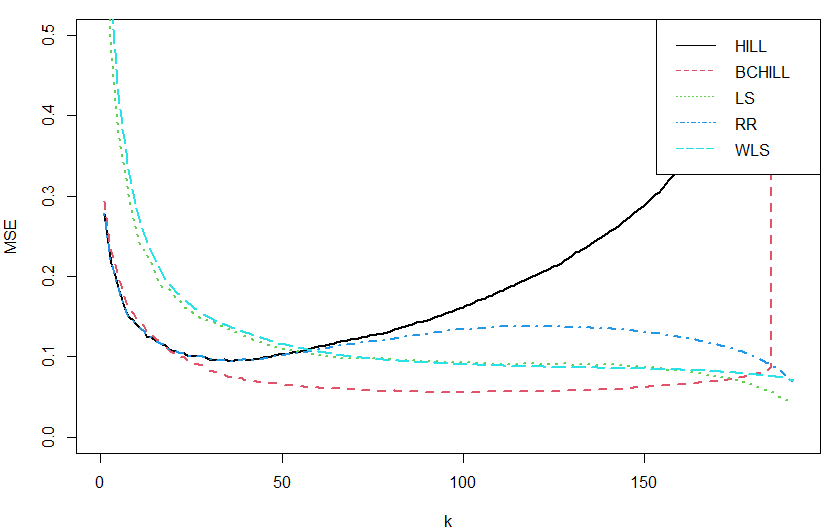}}\hfill
	\caption{Results for Log-Gamma distribution with $\gamma=1.0:$ average EVI(leftmost); Bias(middle panel); MSE(rightmost panel). First row: $n=50;$ second row: $n=200.$}
	\label{Fig:Figure A3}
\end{figure}

\bibliographystyle{apalike}

\end{document}